\tikzstyle{none}=[inner sep=0mm]
\tikzstyle{every loop}=[]
\tikzstyle{env}=[copoint,regular polygon rotate=0,minimum width=0.2cm, fill=black]
\tikzstyle{every picture}=[baseline=-0.25em]
\tikzstyle{dotpic}=[scale=0.5]
\tikzstyle{diredges}=[every to/.style={diredge}]
\tikzstyle{dot graph}=[shorten <=-0.1mm,shorten >=-0.1mm,scale=0.6]
\tikzstyle{plot point}=[circle,fill=black,minimum width=2mm,inner sep=0]
\tikzstyle{braceedge}=[decorate,decoration={brace,amplitude=2mm,raise=-1mm}]
\tikzstyle{small braceedge}=[decorate,decoration={brace,amplitude=1mm,raise=-1mm}]
\tikzstyle{left hook arrow}=[left hook-latex]
\tikzstyle{right hook arrow}=[right hook-latex]
\tikzstyle{dtriangle}=[fill=yellow,draw=black,shape=isosceles triangle,shape border rotate=-90,isosceles triangle stretches=true,inner sep=1pt,minimum width=0.4cm,minimum height=3mm]
\tikzstyle{vtriang}=[fill=yellow,draw=black,shape=isosceles triangle,shape border rotate=180,isosceles triangle stretches=true,inner sep=1pt,minimum width=0.4cm,minimum height=3mm]
\tikzstyle{triangle}=[fill=yellow,draw=black,shape=isosceles triangle,shape border rotate=90,isosceles triangle stretches=true,inner sep=1pt,minimum width=0.4cm,minimum height=3mm]
\tikzstyle{H box}=[rectangle,fill=yellow,draw=black,xscale=1.0,yscale=1.0, inner sep=1.pt]
\tikzstyle{gbox}=[rectangle,fill=green,draw=black,xscale=1.0,yscale=1.0, inner sep=1.pt]
\tikzstyle{rbox}=[rectangle,fill=red,draw=black,xscale=1.0,yscale=1.0, inner sep=1.pt]
\tikzstyle{bn}=[circle,fill=black,draw=black,scale=.4]
\tikzstyle{wn}=[circle,fill=white,draw=black,scale=.6]
\tikzstyle{dn}=[circle,fill=none,draw=gray]
\tikzstyle{black dot}=[inner sep=0.7mm,minimum width=0pt,minimum height=0pt,fill=black,draw=black,shape=circle]
\tikzstyle{dot}=[black dot]
\tikzstyle{smalldot}=[inner sep=0.2mm,minimum width=0pt,minimum height=0pt,fill=black,draw=black,shape=circle]
\tikzstyle{white dot}=[dot,fill=white]
\tikzstyle{antipode}=[white dot,inner sep=0.3mm,font=\footnotesize]
\tikzstyle{smallwhitedot}=[smalldot,fill=white]
\tikzstyle{alt white dot}=[white dot,label={[xshift=3.07mm,yshift=-0.05mm,font=\footnotesize]left:$*$}]
\tikzstyle{gray dot}=[dot,fill=gray!40!white]
\tikzstyle{smallgraydot}=[smalldot,fill=gray!40!white]
\tikzstyle{box vertex}=[draw=black,rectangle]
\tikzstyle{small box}=[box vertex,fill=white]
\tikzstyle{whitebg}=[fill=white,inner sep=2pt]
\tikzstyle{graph state vertex}=[sg vertex,fill=black]
\tikzstyle{wide copoint}=[fill=white,draw=black,shape=isosceles triangle,shape border rotate=90,isosceles triangle stretches=true,inner sep=1pt,minimum width=1.5cm,minimum height=5mm]
\tikzstyle{wide point}=[fill=white,draw=black,shape=isosceles triangle,shape border rotate=-90,isosceles triangle stretches=true,inner sep=1pt,minimum width=1.5cm,minimum height=4mm]
\tikzstyle{very wide copoint}=[fill=white,draw=black,shape=isosceles triangle,shape border rotate=-90,isosceles triangle stretches=true,inner sep=1pt,minimum width=2.5cm,minimum height=4mm]
\tikzstyle{very wide empty copoint}=[draw=black,shape=isosceles triangle,shape border rotate=-90,isosceles triangle stretches=true,inner sep=1pt,minimum width=2.5cm,minimum height=4mm]
\tikzstyle{symm}=[ultra thick,shorten <=-1mm,shorten >=-1mm]
\tikzstyle{square box}=[rectangle,fill=white,draw=black,minimum height=5mm,minimum width=5mm,font=\small]
\tikzstyle{square gray box}=[rectangle,fill=gray!30,draw=black,minimum height=6mm,minimum width=6mm]
\tikzstyle{copoint}=[regular polygon,regular polygon sides=3,draw=black,scale=0.75,inner sep=-0.5pt,minimum width=7mm,fill=white]
\tikzstyle{point}=[regular polygon,regular polygon sides=3,draw=black,scale=0.75,inner sep=-0.5pt,minimum width=7mm,fill=white,regular polygon rotate=180]
\tikzstyle{gray point}=[point,fill=gray!40!white]
\tikzstyle{gray copoint}=[copoint,fill=gray!40!white]
\newcommand{\edgearrow}{{\arrow[black]{>}}}
\newcommand{\edgetick}{{\arrow[black,scale=0.7,very thick]{|}}}
\tikzstyle{diredge}=[->]
\tikzstyle{rdiredge}=[<-]
\tikzstyle{medium diredge}=[->]
\tikzstyle{short diredge}=[->]
\tikzstyle{halfedge}=[-)]
\tikzstyle{other halfedge}=[(-]
\tikzstyle{freeedge}=[(-)]
\tikzstyle{white edge}=[line width=5pt,white]
\tikzstyle{tick}=[postaction=decorate,decoration={markings, mark=at position 0.5 with \edgetick}]
\tikzstyle{small map edge}=[|-latex, gray!60!blue, shorten <=0.9mm, shorten >=0.5mm]
\tikzstyle{thick dashed edge}=[very thick,dashed,gray!40]
\tikzstyle{map edge}=[|-latex,very thick, gray!40, shorten <=1mm, shorten >=0.5mm]
\tikzstyle{tickedge}=[postaction=decorate,
\tikzstyle{dirtickedge}=[postaction=decorate,
\tikzstyle{dirdoubletickedge}=[postaction=decorate,
\newcommand{\boxshape}[3]{%
\pgfdeclareshape{#1}{
\inheritsavedanchors[from=rectangle] 
\inheritanchorborder[from=rectangle]
\inheritanchor[from=rectangle]{center}
\inheritanchor[from=rectangle]{north}
\inheritanchor[from=rectangle]{south}
\inheritanchor[from=rectangle]{west}
\inheritanchor[from=rectangle]{east}
\backgroundpath{
\southwest \pgf@xa=\pgf@x \pgf@ya=\pgf@y
\northeast \pgf@xb=\pgf@x \pgf@yb=\pgf@y

\@tempdima=#2
\@tempdimb=#3

\pgfpathmoveto{\pgfpoint{\pgf@xa - 5pt + \@tempdima}{\pgf@ya}}
\pgfpathlineto{\pgfpoint{\pgf@xa - 5pt - \@tempdima}{\pgf@yb}}
\pgfpathlineto{\pgfpoint{\pgf@xb + 5pt + \@tempdimb}{\pgf@yb}}
\pgfpathlineto{\pgfpoint{\pgf@xb + 5pt - \@tempdimb}{\pgf@ya}}
\pgfpathlineto{\pgfpoint{\pgf@xa - 5pt + \@tempdima}{\pgf@ya}}
\pgfpathclose
}
}}
\tikzstyle{map}=[draw,shape=NEbox,inner sep=7pt]
\tikzstyle{mapdag}=[draw,shape=SEbox,inner sep=7pt]
\tikzstyle{maptrans}=[draw,shape=SWbox,inner sep=7pt]
\tikzstyle{mapconj}=[draw,shape=NWbox,inner sep=7pt]
\tikzstyle{probs}=[shape=semicircle,fill=gray!40!white,draw=black,shape border rotate=180,minimum width=1.2cm]
\tikzstyle{arrs}=[-latex,font=\small,auto]
\tikzstyle{arrow plain}=[arrs]
\tikzstyle{arrow dashed}=[dashed,arrs]
\tikzstyle{arrow bold}=[very thick,arrs]
\tikzstyle{arrow hide}=[draw=white!0,-]
\tikzstyle{arrow reverse}=[latex-]
\tikzstyle{cdnode}=[]
\tikzstyle{gn}=[dot,fill=green,minimum width=0.3cm,inner sep=0pt]
\tikzstyle{rn}=[dot,fill=red,inner sep=0pt,minimum width=0.3cm]
\tikzstyle{rc}=[dot,thick,fill=white,draw = red,minimum width=0.3cm,inner sep=0pt]
\tikzstyle{gc}=[dot,thick,fill=white,draw= green,inner sep=0pt,minimum width=0.3cm]
\tikzstyle{bc}=[dot,thick,fill=white,draw= blue,minimum width=0.3cm]
\tikzstyle{label}=[circle,fill=white,minimum width=0.3cm]
\tikzstyle{clocklabel}=[dot,fill=yellow,draw=black,font=\tiny,inner sep=0.75pt]
\tikzstyle{rsn}=[circle split,draw,fill=red,font=\tiny,inner sep=0.75pt]
\tikzstyle{gsn}=[circle split,draw,fill=green,font=\tiny,inner sep=0.75pt]
\tikzstyle{bsn}=[circle split,draw,fill=blue,font=\tiny,inner sep=0.75pt]
\tikzstyle{rsc}=[circle split,thick,draw= red,draw,fill=white,font=\tiny,inner sep=0.75pt]
\tikzstyle{gsc}=[circle split,thick,draw= green,draw,fill=white,font=\tiny,inner sep=0.75pt]
\tikzstyle{bsc}=[circle split,thick,draw= blue,draw,fill=white,font=\tiny,inner sep=0.75pt]
\tikzstyle{cnot}=[fill=white,shape=circle,inner sep=-1.4pt]
\tikzstyle{wire label}=[font=\tiny, auto]
\tikzstyle{cdiag}=[matrix of math nodes, row sep=3em, column sep=3em, text height=1.5ex, text depth=0.25ex,inner sep=0.5em]
\tikzstyle{arrow above}=[transform canvas={yshift=0.5ex}]
\tikzstyle{arrow below}=[transform canvas={yshift=-0.5ex}]
\newcommand\ie{\emph{i.e.}}
\newcommand\eg{\emph{e.g.}}
\edef\pp{parity-phase}
\edef\Pp{Parity-phase}
\providecommand\pipp{$\pi\!\!\:/4\!\:$-\pp}
\newcommand\parStyle[1]{\textrm{\mdseries\upshape({#1}\kern0.1ex)}}
\newlength\romanumlabelwd
  \renewenvironment{abstract}{%
      \if@twocolumn
        \subsection*{\abstractname}%
        \small
      \else
        \small
        \begin{center}%
          {\bfseries \abstractname\vspace{-.5em}\vspace{\z@}}%
        \end{center}%
        \quotation
      \fi}
      {\if@twocolumn\par\else\endquotation\fi}
\def\tagform@#1{%
	\ifmmode
		\mbox{\normalsize\maketag@@@{(\ignorespaces#1\unskip\@@italiccorr)}}%
	\else
		\maketag@@@{(\ignorespaces#1\unskip\@@italiccorr)}%
	\fi}
\newtheoremstyle{theorem?}
  {\topsep}{\topsep}   																	
  {\itshape}{0pt}{\bfseries}{?}{5pt plus 1pt minus 1pt} 
  {}          																					
\theoremstyle{theorem?}
\def\@cludgescbf#1#2\end{\textbf{#1\scalefont{0.85}#2}}
\newcommand\cludgescbf[1]{\@cludgescbf#1\end}
\newtheoremstyle{Axiom}
  {\topsep}{\topsep}   																	
  {\itshape}{0pt}{\bfseries}{?}{5pt plus 1pt minus 1pt} 
  {}          																					
\theoremstyle{Axiom}
\theoremstyle{definition}
\newtheorem*{definition*}{Definition}
\theoremstyle{plain}
\newtheorem{theorem}{Theorem}
\newtheorem{lemma}[theorem]{Lemma}
\newtheorem*{proposition*}{Proposition}
\newtheorem*{claim*}{Claim}
\newtheorem*{tactic*}{{\large\scshape phage tactic}}
\theoremstyle{definition}
\newlist{algenum}{enumerate}{9}
\setlist[algenum,1]{%
  label=\arabic*.,
  itemsep=0ex,
  topsep=1ex,
  leftmargin=2em}
\setlist[algenum,2]{%
  label=\alph*\upshape.,
  itemsep=0ex,
  topsep=-0.375ex,
  leftmargin=1.5em}
\setlist[algenum,3]{%
  label=(\!\!\;\itshape\roman*\;\!\upshape),
  topsep=0ex,
  itemsep=0.375ex,
  leftmargin=2.5em}
\newcommand\C{\mathbb{C}}
\newcommand\Z{\mathbb{Z}}
\newcommand\e{\mathrm{e}}
\renewcommand\vec\mathbf
\newcommand\herm{^\dagger}
\newcommand\ox{\otimes}
\newcommand\x{\times}
\newcommand\sox[1]{^{\otimes #1}}
\newcommand\sur[1]{^{(#1)}}
\let\prsubset\subset
\let\subset\subseteq
\let\oldepsilon\epsilon
\let\epsilon\varepsilon
\let\varepsilon\oldepsilon
\let\le\leqslant
\let\ge\geqslant
\newcommand\idop{\mathbbm{1\!}}
\providecommand\href[2]{\texttt{#2}}
	\newcommand\ket[1]{\left| #1 \right\rangle\@ifnextchar\bra{\mspace{-4mu}}{}}
	\newcommand\bra[1]{\left\langle #1 \right|}
\newif\if@ref
\renewcommand\subsubsection{\@startsection{subsubsection}{3}{\z@}%
                                     {-3ex\@plus -1ex \@minus -.2ex}%
                                     {1.5ex \@plus .2ex}%
                                     {\let\thesubsection\relax\centering\normalfont\small\itshape\bfseries}}
\def\thetitle{
  ~Techniques to Reduce $\pi/4$-Parity-Phase Circuits,~ \\
  Motivated by the ZX Calculus%
}
\title{\thetitle}
\author{%
  Niel de Beaudrap
  \institute{%
    \hspace*{-0.5em} Department of Computer Science \hspace*{-0.5em}\\
    University of Oxford \\
    Oxford, UK
  }
  \email{\hspace*{-0.5em} niel.debeaudrap@cs.ox.ac.uk \hspace*{-0.5em}}
  \and
  Xiaoning Bian
  \institute{%
    \hspace*{-0.5em} Department of Mathematics \& Statistics \hspace*{-0.5em} \\
    Dalhousie University \\
    Halifax, Canada
  }
  \email{bian@dal.ca }
  \and
  Quanlong Wang%
  \institute{%
    \hspace*{-0.5em} Department of Computer Science \hspace*{-0.5em} \\
    University of Oxford \\
    Oxford, UK
  }
  \institute{%
    \hspace*{-2.5ex} Cambridge Quantum Computing Ltd. \hspace*{-2.5ex} \\
    Cambridge, UK
  }
  \email{\hspace*{-0.5em} quanlong.wang@cs.ox.ac.uk \hspace*{-0.5em}}
}
\begin{document}
\maketitle

\newcommand{%
          \input{./.tikz}
}[1]{%
          \input{./#1.tikz}
}

\tikzstyle{box}=[shape=rectangle, text height=1.5ex, text depth=0.25ex, yshift=0.5mm, fill=white, draw=black, minimum height=3mm, yshift=-0.5mm, minimum width=3mm, font={\small}]
\tikzstyle{Z dot}=[fill={rgb,255:red,216; green,248; blue,216}, inner sep=0mm, minimum size=2mm, shape=circle, draw=black, inner sep=2pt]
\tikzstyle{X dot}=[Z dot, fill={rgb,255:red,227; green,145; blue,145}, shape=circle, draw=black]
\tikzstyle{blank dot}=[Z dot, fill={rgb,255:red,255; green,255; blue,255}, shape=circle, draw=black]
\tikzstyle{Z phase dot}=[fill={rgb,255:red,216; green,248; blue,216},  minimum size=5mm, font={\footnotesize\boldmath\boldsymb}, shape=circle, rounded corners=2mm, inner sep=0.2mm, outer sep=-2mm, scale=0.8, tikzit shape=circle, draw=black]
\tikzstyle{X phase dot}=[Z phase dot, fill={rgb,255:red,227; green,145; blue,145}, font={\footnotesize\boldmath\boldsymb}, inner sep=2pt]
\tikzstyle{hadamard}=[fill=yellow!50!white, draw=black, shape=rectangle, inner sep=0.6mm, minimum height=1.5mm, minimum width=1.5mm]
\tikzstyle{vertex}=[inner sep=0mm, minimum size=1.5mm, shape=circle, draw=black, fill=black]
\tikzstyle{sq}=[inner sep=0mm, minimum size=1.5mm, shape=circle, draw=black, fill=white]
\tikzstyle{vertex set}=[inner sep=0mm, minimum size=1.5mm, shape=circle, draw=black, fill=white, font={\footnotesize\boldmath}]

\tikzstyle{hadamard edge}=[-, color=blue, dashed, dash pattern=on 2pt off 0.7pt]
\tikzstyle{brace edge}=[-, tikzit draw=blue, decorate, decoration={brace,amplitude=1mm,raise=-1mm}]

\vspace*{-5mm}
\begin{abstract}
  To approximate arbitrary unitary transformations on one or more qubits, one must perform transformations which are outside of the Clifford group.
  The gate most commonly considered for this purpose is the $T = \mathrm{diag}(1,\e^{i\pi\!\!\:/4})$ gate.
  As $T$ gates are computationally expensive to perform fault-tolerantly in the most promising error-correction technologies, minimising the ``$T$-count'' (the number of $T$ gates) required to realise a given unitary in a Clifford+$T$ circuit is of great interest.
  We describe techniques to find circuits with reduced $T$-count in unitary circuits, which develop on the ideas of Heyfron and Campbell~\cite{HC-2018} with the help of the ZX calculus.
  Following Ref.~\cite{HC-2018}, we reduce the problem to that of minimising the $T$ count of a CNOT+$T$ circuit.
  The ZX calculus motivates a further reduction to simplifying a product of commuting ``\pipp'' operations: diagonal unitary transformations which induce a relative phase of $\e^{i\pi/4}$ depending on the outcome of a parity computation on the standard basis (which motivated Kissinger and van~de~Wetering~\cite{KvdW-2019} to introduce ``phase gadgets'').
  For a number of standard benchmark circuits, we show that these techniques --- in some cases supplemented by the TODD subroutine of Heyfron and Campbell~\cite{HC-2018} --- yield $T$-counts comparable to or better than the best previously known results.
\end{abstract}


\vspace*{-1ex}
\section{Introduction}

An important goal of quantum technologies is to realise, as faithfully as possible, an architecture capable of performing approximately universal quantum computation.
Ignoring the practical difficulties of imperfectly realised operations and noise in imperfect hardware, such an architecture must be able to approximate an arbitrary unitary transformation with high probability, possibly relative to some embedding of the computational space into the states of its qubits (\eg,~to perform error correction).

The above goal requires that the set of transformations that the architecture can perform do not form a discrete set.
This is challenging, as the operations which can be easily performed fault-tolerantly for various error correcting codes form a discrete set --- often the Clifford group, or a subset of it.
As the Clifford group is in any case useful to reason about quantum error correction and very simple procedures on quantum data, this motivates  \textrm{\bfseries (a)}~considering fault-tolerant realisations of the Clifford group, together with a more labour-intensive procedure to realise some unitary transformation outside of the Clifford group, and then \textrm{\bfseries (b)}~minimising the number of non-Clifford gates required to realise or approximate a given unitary.
The most popular approach is to consider ``Clifford+T'' circuits, using a gate-set such as $\{\mathrm{CNOT}, H, S, T\}$, involving CNOT, the Hadamard gate $H$, and $S = \mathrm{diag}(1,i)$ as generators of the Clifford group,
supplemented by the gate $T = \mathrm{diag}(1, \e^{i\pi/4}) = \sqrt{S}$.
We then consider the problem of minimising the \emph{T-count} of a unitary transformation: the number of $T$ gates to realise (or approximate) that unitary.

Heyfron and Campbell~\cite{HC-2018} describe a circuit transformation that allows one to realise a Clifford+T unitary using a circuit consisting of a circuit of CNOT operations,
a circuit of
\iffalse
$\mathcal D_3$
\else
  diagonal non-Clifford
\fi
operations, and a sequence of (possibly classically controlled) Clifford operations.
This allows them to reduce the problem of $T$-count reduction to an appropriate analysis of the
\iffalse
$\mathcal D_3$
\else
diagonal non-Clifford
\fi
portion of this circuit.
The strategy of Heyfron and Campbell~\cite{HC-2018} is to consider non-Clifford diagonal circuit in terms of \emph{phase polynomials}, and builds on a sequence of results which revolve around such operations~\cite{AMMR-2013,GKMR-2014,AMM-2014,ACR-2018,CH-2017,AM-2019} presented in various but similar ways.
These results note the connection of $T$-count optimisation to difficult coding problems and NP-hard tensor decomposition problems~\cite{AM-2019,HC-2018}, and generally approach the problem of reducing $T$-count by approaching these difficult problems.

Our approach is to describe diagonal CNOT+$T$ unitaries using ``\pipp'' operations.
These are operations which induce a $\e^{i\pi/4}$ phase on standard basis states, depending on a parity computation $f(x) = x_{k_1} \oplus x_{k_2} \oplus \cdots \oplus x_{k_m}$\,, for any integer $m \ge 1$, and $1 \le k_1, k_2, \ldots, k_m \le n$.
As each \pipp\ gate can be realised in principle using a single $T$ or $T\herm$ gate (and some CNOT gates), simplifying \pipp\ circuits is directly productive to reducing $T$-count.
On this same line of investigation, Kissinger and van~de~Wetering~\cite{KvdW-2019} use the ZX calculus to describe a technique of ``phase teleportation'' to reduce circuits involving ``phase gadgets'' (denoting unitaries such as our \pipp\ operations).

In this article, we describe a framework to reduce $T$-count, by using ``tactics'' which are induced by any family of identities on \pipp\ operations.
We then describe some identities on \pipp\ operations (which define two different such tactics), and describe strategies to deploy these tactics in an effective way.
Our techniques yield new records for the $T$-count of some standard benchmark circuits, and yield results which are near to the best known results in further circuits.
Because of the simple way in which we use these identities on \pipp\ operations, we speculate that even these record-setting results may be easy to improve on.

\vspace*{-1ex}
\section{Preliminaries}

We first set out some basic or existing results, using the following notation.
Let $[n] := \{1,2,\ldots,n\}$ and $\idop$ be the $2 \!\x\! 2$ identity matrix.
For sets $S, T \subset V$ we write $S \mathbin\Delta T$ for the symmetric difference $(S \cup T) \setminus (S \cap T)$, and $\vec x\sur{S} \in \{0,1\}^V$ denote the incidence vector of $S$, where $x^{\,(S)}_{\!\!\!\:\smash{j}} = 1$ if and only if $j \in S$.

\vspace*{-1ex}
\subsection{The Clifford hierarchy} 

Let $\mathcal P^n := \bigl\{ i^k \!\!\; P_1 \ox \cdots \ox P_n \!\mathrel{\big\vert}\! k \!\in\! \Z \mathrel\& P_j \!\in\! \{\idop,\!\!\;X,\!\!\;Y,\!\!\;Z\} \!\bigr\}$ denote the $n$-qubit Pauli group.
We define the Clifford hierarchy (on $n$ qubits) by defining $\mathcal C_1^n := \mathcal P_n$, and
\vspace*{-1ex}
\begin{align}
\label{eqn:CliffordHierarchy}%
  \mathcal C_k^n = \bigl\{ U \!\in\! \mathrm U_n(\C) \mathbin{\bigl\lvert}
                    \forall P \!\in\! \mathcal P^n\!.\; U\!\!\: PU\herm \!\in\!\!\; \mathcal C_{k\text{--}1}^n \bigr\}
\end{align}
\vspace*{-3.5ex}

\noindent for $k > 1$. 
We then define $\mathcal D_k^n \subseteq \mathcal C_k^n$ to be the subset of diagonal operations.
As an abuse of notation, we will identify $\mathcal C_k^n$ and $\mathcal D_k^n$ with subsets of $\mathcal C_k^N$ and $\mathcal D_k^N$ (respectively) for $n < N$. 
As a part of this abuse of notation, we allow ourselves to write $S \in \mathcal C_2^n$ and $T \in \mathcal C_3^n$ for all $n \ge 1$. 



\vspace*{-1ex}
\subsection{\Pp\ operations}

\paragraph{Defining \pp\ operations.}
It is easy to show that $\mathcal D_k^n$ forms an abelian group.
In particular, one can show (see Appendix~\ref{apx:generators-Dkn}) that $\mathcal D_k^n$ is generated by the operators $\omega \cdot \idop\sox{n}$ for any global phase $\omega$, together with all operations of the form $D_{\!\!\:S,k}$ for sets $S = \{s_1,\ldots,s_m\} \subseteq [n]$ for $m \ge 1$, defined by
\vspace*{-.5ex}
\begin{equation}
\begin{aligned}[b]
    D_{\!\!\:S,k}
  \;=\;
    \exp\Bigl(-\frac{\text{\small$i\pi$}}{2^{k}}\bigl(Z_{s_1} \!\ox \cdots \ox\! Z_{s_m}\bigr)\Bigr)
  \;=\;
    \exp\Bigl(-\frac{\text{\small$i\pi$}}{2^{k}} \!\; Z_S\Bigr)
  \;=\;
    \cos\bigl(\tfrac{\pi}{2^{k}}\bigr) \idop - i \sin\bigl(\tfrac{\pi}{2^{k}}\bigr) Z_S\,,
\end{aligned}
\end{equation}
\vspace*{-2.5ex}

\noindent
where $Z_S = \smash{\bigotimes_{j\in S} Z_j}$\,.
(We define $D_{\!\!\:S,k}$ for all $k \in \Z$; however, one may show  $D_{\!\!\:S,0} \!=\! - \idop\sox{n}$ and $D_{\!\!\:S,\!\!\;-k} \!=\! \idop\sox{n}$ for all $k>0$ and $S \subseteq [n]$.)
Note that $\smash{X_j Z_S X_j\herm} = \smash{(-1)^{x\sur{S}_{\!\!\;j}}\!\: Z_S}$, and  $\mathrm{CNOT}_{h,j}\,Z_S\,\mathrm{CNOT}_{h,j}\herm = Z_{S'}$ such that
\vspace*{-1.5ex}
\begin{equation}
  S'  \;=\;
  \begin{cases}
    S \mathbin\Delta\{h\}, & \text{if $j \in S$};
  \\
    S,  & \text{otherwise}.      
  \end{cases}
\end{equation}%
\vspace*{-2.5ex}

\begin{subequations}%
\label{eqn:conjugate-Dk-NOTs}\noindent
From this it follows that
\vspace*{-2.5ex}
\begin{align}
  \label{eqn:conjugate-Dk-X}
    X_j \,D_{\!\!\:S,k} \,X_j\herm
  \;&=\;
    D_{\!\!\:S,k}^{-1}
  \;\in\;
    \mathcal D_k^n
  \\[-5.5ex]\notag
\intertext{%
  if $j \in S$ (and $X_j \,D_{\!\!\:S,k} \,X_j\herm = D_{\!\!\:S,k}$ otherwise); and
}
  \notag\\[-4.5ex]
  \label{eqn:conjugate-Dk-CNOT}
    \mathrm{CNOT}_{h,j}\,D_{\!\!\:S,k}\,\mathrm{CNOT}_{h,j}\herm
  \;&=\;
    D_{\!\!\:S'\!\!\!\:,\:\!k}
  \;\in\; \mathcal D_k^n
\end{align}%
\end{subequations}%
\vspace*{-3ex}

\noindent
so that $\mathcal D_k^n$ is preserved under conjugation by CNOT and $X$ operations.
Also note that $D_{\!\!\:S,\:\!k}^{\;2} = D_{\!\!\:S,\:\!k{-}1}$, from which it follows that $\mathcal D_{k{-}1}^n \subset \mathcal D_k^n$.

We refer to operations $D_{\!\!\:S,k{+}1}$\,, and their inverses, as ``$\pi\!\!\:/2^k\!\:$-\pp'' operations.
We motivate this terminology as follows.
Let $S = \{s_1, s_2, \ldots, s_m\}$ for some $m \ge 1$.
For a standard basis vector $\ket{z}$, we have $Z_S \ket{z} = (-1)^{\vec x\sur{S} \cdot\;\! z} \ket{z}$, where we define $\vec x\sur{S} \cdot z = \sum_i x^{\,\smash{(S)}}_i z_i$\,.
From this it follows that
\vspace*{-1ex}
\begin{equation}
  D_{\!\!\:S,k{+}1} \ket{z} =
  \begin{cases}
      \exp(-i\pi/2^{k{+}\!1}) \ket{z} , 
          & \text{if $\vec x\sur{S} \cdot z = 0$};
    \\[0.5ex]
      \exp(+i\pi/2^{k{+}\!1}) \ket{z} , 
          & \text{if $\vec x\sur{S} \cdot z = 1$}.
  \end{cases}
\end{equation}

\vspace*{-1ex}
\noindent
This is equivalent (up to a global phase of $\e^{-i\pi\!\!\:/2^{k{+}\!1}}$) to inducing a relative phase of $\pi\!\!\:/2^k$ on $\ket{z}$ for those $z \in \{0,1\}^n$ for which $\vec x\sur{S} \cdot z = z_{s_1} \oplus z_{s_2} \oplus \cdots \oplus z_{s_m} = 1$; and similarly for $D_{\!\!\:S,{k{+}1}}^{-1}$.
More generally, we refer to $\exp(\pm \tfrac{1}{2}i \theta Z_S)$ as a $\theta\!\!\;$-\pp\ operation.
We note that $\theta\!\!\;$-phase parity operation, the operators $D_{\!\!\:S,k}$ among them, can be represented by ZX diagrams with the usual denotational semantics (read from left to right in this article), with structure such as the following:
\vspace*{-.5ex}
\begin{equation}
  \label{eqn:diagramD-Sk}
    \begin{aligned}
    \begin{tikzpicture}[scale=0.875]
      \def\dx{0.4}
      \def\dy{0.4}
      \coordinate (0) at (0,0);
      \coordinate (x0-0) at (0);
      \coordinate (x1-0) at ($(x0-0) + (0,\dy)$);
      \coordinate (x2-0) at ($(x1-0) + (0,\dy)$);
      \coordinate (x3-0) at ($(x2-0) + (0,\dy)$);
      \coordinate (x4-0) at ($(x3-0) + (0,\dy)$);
      \coordinate (x5-0) at ($(x4-0) + (0,\dy)$);
      \coordinate (x6-0) at ($(x5-0) + (0,\dy)$);
      \xdef\u{0}
      \foreach \t in {1,...,5} {%
        \foreach \k in {0,...,6} {%
          \coordinate (x\k-\t) at ($(x\k-\u) + (\dx,0)$);
          \ifnum\k=3%
            \ifnum\t=2
              \coordinate (x3-2) at ($(x3-2) + ({\dx/2},0)$);
            \fi
          \else
            \draw (x\k-\u) -- (x\k-\t);
          \fi
        }
        \xdef\u{\t}
      }
      \node at ($(x3-1) + (0,0.1)$) {$\vdots$};
      \foreach \k in {0,2,4,5} {%
        \draw (x\k-1) -- (x3-2);
        \filldraw [style=Z dot] (x\k-1) circle (3pt);
      }
      \draw (x3-2) -- (x3-3);
      \filldraw [style=X dot] (x3-2) circle (3pt);
      \filldraw [style=Z dot] (x3-3) circle (3pt) node [anchor=180,font=\footnotesize] {\;$\pm \theta$}; 
    \end{tikzpicture}
    \end{aligned}
    \qquad
    \Bigl(\text{or }
    \;\;
    \begin{aligned}
    \begin{tikzpicture}[scale=0.875]
      \def\dx{0.4}
      \def\dy{0.4}
      \coordinate (0) at (0,0);
      \coordinate (x0-0) at (0);
      \coordinate (x1-0) at ($(x0-0) + (0,\dy)$);
      \coordinate (x2-0) at ($(x1-0) + (0,\dy)$);
      \coordinate (x3-0) at ($(x2-0) + (0,\dy)$);
      \coordinate (x4-0) at ($(x3-0) + (0,\dy)$);
      \coordinate (x5-0) at ($(x4-0) + (0,\dy)$);
      \coordinate (x6-0) at ($(x5-0) + (0,\dy)$);
      \xdef\u{0}
      \foreach \t in {1,...,7} {%
        \foreach \k in {0,...,6} {%
          \coordinate (x\k-\t) at ($(x\k-\u) + (\dx,0)$);
          \ifnum\k=3%
            \ifnum\t=2
              \coordinate (x3-2) at ($(x3-2) + ({\dx/2},0)$);
            \fi
          \else
            \draw (x\k-\u) -- (x\k-\t);
          \fi
        }
        \xdef\u{\t}
      }
      \node at ($(x3-1) + (0,0.1)$) {$\vdots$};
      \foreach \k in {0,2,4,5} {%
        \draw (x\k-1) -- (x3-2);
        \filldraw [style=Z dot] (x\k-1) circle (3pt);
      }
      \draw (x3-2) -- (x3-3);
      \filldraw [style=X dot] (x3-2) circle (3pt);
      \filldraw [style=Z dot] (x3-3) circle (3pt) node [anchor=180,font=\footnotesize] {\;$\pm \theta, R$}; 
    \end{tikzpicture}
    \end{aligned}\;\;\;
    \text{if conditioned on the parity of a set of bits $R$}\Bigr)
\end{equation}
\vspace*{-2ex}

\noindent
where the long horizontal wires are the qubits indexed by $[n] = \{1,2,\ldots,n\}$ and $S \subseteq [n]$ is the subset of those qubits which have (light, green) degree-3 nodes on them.
In the right-hand diagram, $R$ denotes a set of boolean variables $s_i \in \{0,1\}$: using the extended annotations of Ref.~\cite{DP-2010}, the diagram  denotes that the phase applied is $\pm \theta$ only if $\bigoplus_{s_i \in R} s_i = 1$, and that otherwise the phase is zero.
We refer to these as ``phase gadgets'', adopting the terminology of Ref.~\cite[Section~4.3]{KvdW-2019}; when $\lvert S \rvert = m$, we may refer to it as a ``phase $m$-gadget''.
(If $\theta$ is an odd multiple of $\pi/4$, we may refer to it as a ``$T$-phase $m$-gadget''; for $\theta$ an integer multiple of $\pi/2$, we refer to it as a ``Clifford-phase $m$-gadget''.
If $m=1$, we may also mildly abuse this terminology to refer to  a simple green phase node as a ``$1$-gadget''.)

\vspace*{-2ex}
\paragraph{\Pp\ operations in relation to controlled phases.}
An important role of $D_{S,3}$ gates for $S \subseteq [n]$ is their relationship to  diagonal gates in $\mathcal D_3^n$ which are controlled-unitaries of a more straightforward sense, such as $\mathrm CS$ and $\mathrm {CC}Z$:
\vspace*{-1ex}
\begin{align}
  \begin{split}
    \mathrm CS
    &=  
      \ket{0}\!\!\bra{0} \ox \idop + \ket{1}\!\!\bra{1} \ox S
    \\&=
      \exp\Bigl(\frac{i\pi}{2} \ket{11}\!\!\bra{11}\Bigr),
  \end{split}
&\quad
  \begin{split}
    \mathrm{CC}Z
    &=  
      \Bigl(
        \ket{00}\!\!\bra{00} + \ket{01}\!\!\bra{01} + \ket{10}\!\!\bra{10}
      \Bigr) \ox \idop
      + \ket{11}\!\bra{11} \ox Z
    \\&=
      \exp\Bigl(i \pi \ket{111}\!\!\bra{111}\Bigr);
  \end{split}
\end{align}
\vspace*{-2ex}

\noindent
we may describe how to generate these from $D_{k,3}$ operations by decomposing the projectors $\ket{11}\!\!\bra{11}$ or $\ket{111}\!\!\bra{111}$ into tensor products of $\ket{1}\!\!\bra{1} = \tfrac{1}{2}\bigl(\idop - Z)$, and expanding to obtain a product of $D_{S,3}$ gates (see Eqn.~\eqref{eqn:decomposeCtrlPhase} in Appendix~\ref{apx:generators-Dkn}): disregarding the $D_{\emptyset,3}$ factors, which realise global phases, we obtain
\vspace*{-0.5ex}
\begin{align}
\label{eqn:decomposeCSandCCZ}
    \mathrm CS_{h,j}
    &\propto
      D_{\{h\},3} D_{\{j\},3} D_{\{h,j\},3}^{\,-1}\;;
&\quad
    \mathrm{CC}Z_{g,h,j}
    &\propto
      D_{\{g\},3} D_{\{h\},3} D_{\{j\},3} D_{\{g,h\},3}^{\,-1} D_{\{g,j\},3}^{\,-1} D_{\{h,j\},3}^{\,-1} D_{\{g,h,j\},3}\;.
\end{align}
\vspace*{-2ex}

\noindent
More generally, we may relate ${(t{\:\!-\!\!\;}1\!\!\:)}$-controlled $\pi\!\!\:/2^k$-phase gates to $\pi\!\!\:/2^{k-t+1}$-phase parity gates, following Eqn.~\eqref{eqn:decomposeCtrlPhase}:
\vspace*{-.5ex}
\begin{equation}
    \prod_{\substack{S \in \wp(V) \\ S \ne \emptyset}} D_{S,k}^{(-1)^{\scriptstyle \lvert S \rvert}}
  \propto
    \;\exp\Bigl(\frac{i\pi}{2^{k-\lvert V\rvert+1}}  \ket{1}\!\!\bra{1}\sox{V}\Bigr),
\end{equation}

\vspace*{-.5ex}
\noindent
where the right-hand operator applies a phase of $\pi\!\!\:/2^{k-\lvert T \rvert-1}$ to those components of a state in which all of the qubits in $T$ are in the state $\ket{1}$.
A corollary of this, on which our results depend, is that
\vspace*{-.5ex}
\begin{equation}
  \label{eqn:phaseParityGateIdentity}
    \prod_{\substack{S \in \wp(V) \\ S \ne \emptyset}} D_{S,k}^{(-1)^{\scriptstyle \lvert S \rvert}}
  \propto\;
    \idop\sox{V},
  \quad
  \text{so that}
  \quad
    D_{V\!\!\:,\:\!k}
    \;\propto\!
    \prod_{\substack{S \in \wp(V) \\ S \notin \{\emptyset,V\}}} D_{S,k}^{(-1)^{\scriptstyle \lvert V \rvert - \lvert S \rvert + 1}} \!,
  \qquad\text{for $\lvert V \rvert > k$}.
\end{equation}
\vspace*{-2ex}

\vspace*{-2ex}
\paragraph{Connection between \pp\ operations and $T$-count.}
From Eqn.~\eqref{eqn:conjugate-Dk-CNOT}, it follows that any operation $D_{\!\!\:S,k}$ can be reduced to an operation $D_{\!\!\:{j},k} \propto \mathrm{diag}(1,\e^{2 \pi i/2^k})$ acting on a single qubit $j$, by conjugation with an appropriate CNOT circuit.
In particular, it follows that any $D_{\!\!\:S,3}$ circuit has minimal $T$-count $1$.
This allows us to approach the question of reducing $T$ count by considering decompositions of unitaries involving few \pipp\ operations, acting on many qubits. 

\vspace*{-2ex}
\paragraph{Previous work on \pipp\ operations, and the role of the ZX calculus.} 
Phase-parity operations were identified early in our work as objects of interest, independently of that of Ref.~\cite{KvdW-2019} (which is also informed by the ZX calculus) or Refs.~\cite{AM-2019,ZhangCheng19} (which do not use the ZX calculus).
Amy and Mosca~\cite{AM-2019} identify these as relevant unitary operators, but immediately proceed to describe them rather in terms of more local controlled-phase operations.
Kissenger and Van\;de\;Wetering~\cite{KvdW-2019} seem to have identified \pipp\ operations (in the form of ZX phase gadgets) for similar reasons to us: there is the sense of being confronted with them as the principal object of study, but the lack of commitment of the ZX calculus to the circuit model allows one to be more relaxed about their nature as many-qubit operations.
We note that Zhang and Chen~\cite{ZhangCheng19}, and for that matter Litinski~\cite{Litinski-2019}, demonstrate that the ZX calculus is not actually required to productively reason about \pipp\ operations.
Indeed, little knowledge to the ZX calculus is required either to understand or to make use of our results.
The role played by the ZX calculus in our work is therefore not an essential one: instead, the role played by the ZX calculus was to quickly single out \pipp\ operations as the relevant objects of study, and to allow us to easily reason about them --- which are the main things that one might reasonably ask of a good mathematical notation.

\vspace*{-1ex}
\section{Reduction of $T$-count through simplification of \pp\ circuits}

In this section, we apply circuit reduction techniques similar to those of Heyfron and Campbell~\cite{HC-2018}, augmented with techniques motivated by the ZX calculus, to describe simplifications which can reduce the $T$-count necessary to realise a unitary $\mathcal D_3^n$ operation.
Our results do not make heavy (explicit) use of the re-write rules of the ZX calculus: a reader who is content with circuits including intermediate measurements, and who is comfortable with reading a parity-phase gadget such as that of Eqn.~\eqref{eqn:diagramD-Sk} as a unitary operator, may interpret every diagram below as a circuit diagram.

\vspace*{-1ex}
\subsection{Reduction to ``homogeneous'' circuits of $T$-gadgets}
\label{sec:reductionToDk}

We first consider a series of circuit transformations, following (and mildly extending) that of Heyfron and Campbell~\cite{HC-2018}, to reduce the amount of non-Clifford diagonal operations used to realise a unitary $U$.
We suppose that $U$ is given by a circuit $\mathbf C$, initially expressed as a circuit over the gate-set $\bigl\{X, \mathrm{CNOT}, \mathrm{CCNOT}, Z, \mathrm{C}Z, \mathrm{CC}Z, H, S, T, \mathrm{SWAP}\bigr\}$ --- of which all  gates apart from $\{\mathrm{CCNOT}, \mathrm{CC}Z, T\}$ are Clifford gates.
We note that reversible circuits commonly involve multiply-controlled-NOT gates with more than two controls: for the sake of simplicity we suppose that these have been decomposed into $\mathrm{CCNOT}$ gates, involving computation and uncomputation on auxiliary qubits initialised to $\ket{0}$ in the usual way (though superior techniques to this are by now well-established: see~\emph{e.g.}~\cite{Jones-2013,Gidney-2018,MSCRdM-2019}).

The main concept is to isolate a ``homogeneous circuit'' of $\mathcal D_3^n$ operations, preceded and followed by circuits consisting entirely of (possibly classically-controlled) Clifford operations and Pauli observable measurements.
To this end, we transform $\mathbf C$ as follows:
\begin{enumerate}
\item
  Replace the reversible classical operations $X$, $\mathrm{CNOT}$, and $\mathrm{CCNOT}$ with the decompositions $H\:\!Z\:\!H$, ${(\idop \!\otimes\! H) \,\mathrm{C}Z\, (\idop \!\otimes \!H)}$, and ${(\idop \!\otimes\! \idop \!\otimes\! H) \,\mathrm{CC}Z\, (\idop \!\otimes\! \idop \!\otimes\! H)}$ respectively.
  --- After this step, $\mathrm{CC}Z$ and $T$ are the only non-Clifford gates in $\mathbf C$.
  
\item
  Cancel consecutive pairs of self inverse gates $H$, $X$, $Z$, $\mathrm{C}Z$, or $\mathrm{SWAP}$ which occur in the circuit (\emph{e.g.},~such as may be introduced in the preceding step), and commute as many Clifford gates to the beginning\;/\;end of $\mathbf C$ as possible without transforming any of the gates in the circuit (\emph{e.g.},~commuting $\mathrm{C}Z$ operations but not Hadamard operations past $\mathrm{CC}Z$ operations).
  We refer to these as the ``initial'' and ``final'' Clifford stages of $\mathbf C$ below, and the rest of $\mathbf C$ as the ``main body''.

\item
  From the earliest $H$ gate in the circuit to the latest, determine whether it can either be commuted to the initial or final Clifford part of the circuit --- or commuted to be adjacent to another $H$ gate on the same qubit --- by suitable transformations of $X$ gates, $Z$ gates, $\mathrm{C}Z$ gates, or the targets of $\mathrm{CNOT}$ gates.
  If it is possible to commute it in this way, do so.

\item
  Repeat step~2 to cancel pairs of $H$ gates, or extract any further Clifford operations, to the initial or final Clifford stages of $\mathbf C$.
\item
  Rewrite the $H$ gates in the interior of the circuit, using the following circuit\,/\,ZX gadgets (using a fresh classical bit label in place of ``$s$'' below, each time):
  \vspace*{-1ex}
  \begin{equation}{}
  \label{eqn:HadamardGadgetCircuit}
  \mspace{-36mu}
  \begin{aligned}
  \begin{tikzpicture}
    \def\dx{0.4}
    \def\dy{0.66}
    \coordinate (0) at (0,0);
    \coordinate (1) at ($(0) + (\dx,0)$);
    \coordinate (2) at ($(1) + (\dx,0)$);
    \draw (0) -- (1) -- (2);
    \node [style=H box] at (1) {$H$};
  \end{tikzpicture}
  \end{aligned}
  \;\;
  \equiv
  \;\;
  \begin{aligned}
  \begin{tikzpicture}
    \def\dx{0.425}
    \def\dy{1.0}
    \coordinate (0) at (2,{\dy/2});
    \coordinate (x0) at (0);
    \coordinate (y0) at ($(x0) + (0,-\dy)$);
    \xdef\u{0}
    \foreach \t in {1,...,5} {%
      \foreach \v in {x,y} {%
        \ifnum\t=1
          \coordinate (\v\t) at ($(\v\u) + ({2.5*\dx},0)$);
        \else\ifnum\t>3
          \coordinate (\v\t) at ($(\v\u) + ({1.5*\dx},0)$);
        \else
          \coordinate (\v\t) at ($(\v\u) + (\dx,0)$);
        \fi\fi
        \draw (\v\u) -- (\v\t);
      }
      \xdef\u{\t}
    }
    \node [fill=white,draw=none,anchor=west] at (y0) {$\ket{\texttt+}$};
    \filldraw [black] (x3) circle (2pt) -- (y3) circle (2pt);
    \draw [white,line width=2pt] (x1) -- (x2); 
    \draw [white,line width=2pt] (y1) -- (y2);
    \draw (y1) .. controls ++(0.125,0) and ++(-0.125,0) .. (x2);
    \draw [white,line width=4pt] (x1) .. controls ++(0.125,0) and ++(-0.125,0) .. (y2);
    \draw (x1) .. controls ++(0.125,0) and ++(-0.125,0) .. (y2);
    \node [draw, fill=white, inner sep=2pt, font=\small\itshape] (condY) at (x4) {$X$};
    \node [draw, fill=white, inner sep=2pt, label distance=-5mm, minimum height=5mm, minimum width=5.5mm] (meas) at (y4) {};
    \draw ($(meas.south) + (-2.125mm,1.5mm)$) arc (150:30:2.5mm); \draw ($(meas.south) + (0,1mm)$) -- ++(2mm,3mm);
    \node [anchor=north west, inner sep=1.5pt, font=\tiny\itshape] at (meas.north west) {X};
    \draw [double] (meas) -- (condY);
    \draw [white, line width=2pt] (meas) -- (y5);
  \end{tikzpicture}
  \end{aligned}
  \;
  \equiv
  \begin{aligned}
  \begin{tikzpicture}[scale=1]
    \def\dx{0.425}
    \def\dy{1.0}
    \coordinate (0) at (2.4,{\dy/2});
    \coordinate (x0) at (6,0);
    \coordinate (y0) at ($(x0) + (0,-\dy)$);
    \xdef\u{0}
    \foreach \t in {1,...,9} {%
      \foreach \v in {x,y} {%
      \ifnum\t<2
          \coordinate (\v\t) at ($(\v\u) + (\dx,0)$);
      \else\ifnum\t<3
          \coordinate (\v\t) at ($(\v\u) + (0.5*\dx,0)$);
      \else\ifnum\t<4
        \coordinate (\v\t) at ($(\v\u) + (\dx,0)$);
      \else\ifnum\t<8
        \coordinate (\v\t) at ($(\v\u) + (0.875*\dx,0)$);
      \else
        \coordinate (\v\t) at ($(\v\u) + (\dx,0)$);
\fi\fi\fi\fi
        \draw (\v\u) -- (\v\t);
      }
      \xdef\u{\t}
    }
    \draw [white, line width=2pt] (y0) -- (y1);
    \filldraw [style=Z dot] (y1) circle (3pt);
    \draw [white,line width=2pt] (x2) -- (x3); 
    \draw [white,line width=2pt] (y2) -- (y3);
    \draw (y2) .. controls ++(0.125,0) and ++(-0.125,0) .. (x3);
    \draw [white,line width=4pt] (x2) .. controls ++(0.125,0) and ++(-0.125,0) .. (y3);
    \draw (x2) .. controls ++(0.125,0) and ++(-0.125,0) .. (y3);
    \filldraw [style=Z dot] (x4) circle (3pt)
      node [anchor=south,inner sep=-1pt,font=\footnotesize] {$-\pi\!\!\:/2$};
    \filldraw [style=Z dot] (y4) circle (3pt)
      node [anchor=north,inner sep=-1pt,font=\footnotesize] {$-\pi\!\!\:/2$};
    \coordinate (xy6) at ($(x6)!0.625!(y6)$);
    \draw (x5) -- (xy6);
    \draw (y5) -- (xy6);
    \filldraw [style=Z dot] (x5) circle (3pt);
    \filldraw [style=Z dot] (y5) circle (3pt);
    \coordinate (xy7) at ($(x7)!0.625!(y7)$); 
    \draw (xy6) -- (xy7);
    \filldraw [style=X dot] (xy6) circle (3pt);
    \filldraw [style=Z dot] (xy7) circle (3pt) node [anchor=-90,inner sep=0pt,font=\footnotesize] {$\!\!\;\pi\!\!\:/\!\!\;2\!\!$};
    \filldraw [style=X dot] (x8) circle (3pt) node [anchor=-120,inner sep=0pt,font=\footnotesize] {$\!\pi\!\!\;,\!\{\!\!\;s\!\!\;\}$};
    \draw [white, line width=2pt] (y8) -- (y9);
    \filldraw [style=Z dot] (y8) circle (3pt) node [anchor=north,inner sep=-1pt,font=\footnotesize] {$\!\pi\!\!\;,\!\{\!\!\;s\!\!\;\}$};
    \end{tikzpicture}
  \end{aligned}  
  \equiv
  \begin{aligned}
  \begin{tikzpicture}[scale=1]
    \def\dx{0.425}
    \def\dy{1.0}
    \coordinate (0) at (2.4,{\dy/2});
    \coordinate (x0) at (6,0);
    \coordinate (y0) at ($(x0) + (0,-\dy)$);
    \xdef\u{0}
    \foreach \t in {1,...,9} {%
      \foreach \v in {x,y} {%
      \ifnum\t<2
          \coordinate (\v\t) at ($(\v\u) + (\dx,0)$);
      \else\ifnum\t<3
          \coordinate (\v\t) at ($(\v\u) + (0.5*\dx,0)$);
      \else\ifnum\t<4
        \coordinate (\v\t) at ($(\v\u) + (\dx,0)$);
      \else\ifnum\t<8
        \coordinate (\v\t) at ($(\v\u) + (0.75*\dx,0)$);
      \else\ifnum\t<9
        \coordinate (\v\t) at ($(\v\u) + (2*\dx,0)$);
      \else
        \coordinate (\v\t) at ($(\v\u) + (1.25*\dx,0)$);
\fi\fi\fi\fi\fi
        \draw (\v\u) -- (\v\t);
      }
      \xdef\u{\t}
    }
    \draw [white, line width=2pt] (y0) -- (y1);
    \filldraw [style=Z dot] (y1) circle (3pt)
      node [anchor=north,inner sep=1pt,font=\footnotesize] {$\!\!\!-\!\!\:\pi\!\!\:/2$};
    \draw [white,line width=2pt] (x2) -- (x3); 
    \draw [white,line width=2pt] (y2) -- (y3);
    \draw (y2) .. controls ++(0.125,0) and ++(-0.125,0) .. (x3);
    \draw [white,line width=4pt] (x2) .. controls ++(0.125,0) and ++(-0.125,0) .. (y3);
    \draw (x2) .. controls ++(0.125,0) and ++(-0.125,0) .. (y3);
    \coordinate (xy5) at ($(x5)!0.625!(y5)$);
    \draw (x4) -- (xy5);
    \draw (y4) -- (xy5);
    \filldraw [style=Z dot] (x4) circle (3pt);
    \filldraw [style=Z dot] (y4) circle (3pt);
    \coordinate (xy6) at ($(x6)!0.625!(y6)$); 
    \draw (xy5) -- (xy6);
    \filldraw [style=X dot] (xy5) circle (3pt);
    \filldraw [style=Z dot] (xy6) circle (3pt) node [anchor=-90,inner sep=0pt,font=\footnotesize] {$\!\!\;\pi\!\!\:/\!\!\;2\!\!$};
    \filldraw [style=X dot] (x8) circle (3pt) node [anchor=-120,inner sep=0pt,font=\footnotesize] {$\!\pi\!\!\;,\!\{\!\!\;s\!\!\;\}$};
    \filldraw [style=Z dot] (y7) circle (3pt)
      node [anchor=north,inner sep=-1pt,font=\footnotesize] {$\!\!\!-\pi\!\!\:/2$};
    \draw [white, line width=2pt] (y8) -- (y9);
    \filldraw [style=Z dot] (y8) circle (3pt) node [anchor=north,inner sep=-1pt,font=\footnotesize] {$\!\pi\!\!\;,\!\{\!\!\;s\!\!\;\}$};
    \end{tikzpicture}
  \end{aligned}  
  \mspace{-12mu}
  \end{equation}
  --- After this step, $X$ and $\mathrm{CNOT}$ are the only non-diagonal gates left in the main body of $\mathbf C$.
  \medskip  

  \textbf{Interpretive remark.}
  In the circuit second from the left, the two qubits are subject to a SWAP operation, followed by a $\mathrm CZ = \exp\bigl(i\pi\ket{11}\!\!\bra{11}\bigr)$ 
  operation.
  The bottom qubit is measured finally with an $X$ observable measurement (\ie,~in the $\ket{\texttt{\rlap{\raisebox{0.25ex}{+}}{\raisebox{-0.5ex}{-}}}}$ basis), and the top operation is acted on finally by an $X$ operation only if the outcome is $\ket{\texttt-}$.
  On the right are annotated ZX diagrams in the style of Ref.~\cite{DP-2010}, in which the measurement is represented as a projection with a random outcome $s$ which is heralded and may be used to control  phase operations elsewhere.
  The leftmost ZX diagram describes the decomposition of the controlled-$Z$ operation, using $\mathrm CZ_{h,j} \raisebox{0.25ex}{${}\propto D_{\!\!\:\{h,j\},2}^{\phantom{\mathrlap{-1}}} \;\! D_{\!\!\:\{h\},2}^{-1} \;\! D_{\!\!\:\{j\},2}^{-1}$}$\;.
  The final ZX diagram propogates the single-qubit \smash{$D_{\{\ast\},2}^{-1}$} operations towards the preparation and measurement of the second qubit, so that the second qubit is prepared in the $\ket{\texttt{-y}} \propto \ket{0} - i \ket{1}$ state%
  .

\item
  Decompose $\mathrm{CC}Z$ operations in $\mathbf C$ using the formula of Eqn.~\eqref{eqn:decomposeCSandCCZ}, and represent $T$ gates (on some qubit $j$) by $D_{\{j\},3}$.
  --- After this, all non-Clifford operations are \pipp\ operations, and are in the main body of $\mathbf C$.
  
\item  
  Commute all remaining $X$, $Z$, $\mathrm{CNOT}$, $\mathrm{C}Z$, and $\mathrm{SWAP}$ gates to the beginning or end of $\mathbf C$, out of the main body and into the initial of final Clifford phases.
  This may transform various $D_{\!\!\:S,\!\;t}$ gates by Eqns.~\eqref{eqn:conjugate-Dk-NOTs}, changing the set $S$ involved and/or negating the phase, according to the following commutation relations:
  \vspace*{1ex}
  \begin{equation}{}
    \label{x-commute-phgadget}
    %
            \mspace{-18mu}
  \begin{minipage}{23mm}
    \begin{tikzpicture}
      \def\dx{0.4}
      \def\dy{0.5}
      \coordinate (0) at (0,0);
      \coordinate (x1-0) at (0);
      \coordinate (x2-0) at ($(x1-0) + (0,\dy)$);
      \coordinate (x3-0) at ($(x2-0) + (0,\dy)$);
      \coordinate (x4-0) at ($(x3-0) + (0,\dy)$);
      \xdef\u{0}
      \foreach \t in {1,...,5} {%
        \foreach \k in {1,...,4} {%
          \coordinate (x\k-\t) at ($(x\k-\u) + (\dx,0)$);
          \ifnum\k=3%
            \ifnum\t=3
              \coordinate (x3-3) at ($(x3-3) + ({\dx/2},0)$);
            \fi
          \else
            \draw (x\k-\u) -- (x\k-\t);
          \fi
        }
        \xdef\u{\t}
      }
      \node at ($(x3-2) + (0,0.1)$) {$\vdots$};
      \foreach \k in {1,2,4} {%
        \draw (x\k-2) -- (x3-3);
        \filldraw [Z dot] (x\k-2) circle (3pt);
      }
      \draw (x3-3) -- (x3-4);
      \filldraw [X dot] (x3-3) circle (3pt);
      \filldraw [Z dot] (x3-4) circle (3pt) node [anchor=180,font=\footnotesize] {\;$\theta$}; 
      \filldraw [X dot] (x1-1) circle (3pt)
        node [anchor=north,font=\scriptsize] {$\phantom\vert\pi\phantom\vert$};
    \end{tikzpicture}
  \end{minipage}
  \;\;\longrightarrow\;\;
  \begin{minipage}{26mm}
    \hspace*{0.5mm}
    \begin{tikzpicture}
      \def\dx{0.4}
      \def\dy{0.5}
      \coordinate (0) at (0,0);
      \coordinate (x1-0) at (0);
      \coordinate (x2-0) at ($(x1-0) + (0,\dy)$);
      \coordinate (x3-0) at ($(x2-0) + (0,\dy)$);
      \coordinate (x4-0) at ($(x3-0) + (0,\dy)$);
      \xdef\u{0}
      \foreach \t in {1,...,5} {%
        \foreach \k in {1,...,4} {%
          \coordinate (x\k-\t) at ($(x\k-\u) + (\dx,0)$);
          \ifnum\k=3%
          \else
            \draw (x\k-\u) -- (x\k-\t);
          \fi
        }
        \xdef\u{\t}
      }
      \node at ($(x3-1) + (0,0.1)$) {$\vdots$};
      \foreach \k in {1,2,4} {%
        \draw (x\k-1) -- (x3-2);
        \filldraw [Z dot] (x\k-1) circle (3pt);
      }
      \draw (x3-2) -- (x3-3);
      \filldraw [X dot] (x3-2) circle (3pt);
      \filldraw [Z dot] (x3-3) circle (3pt)
        node [anchor=west,font=\scriptsize] {\;$-\theta$};
      \filldraw [X dot] (x1-3) circle (3pt)
        node [anchor=north,font=\scriptsize] {$\phantom\vert\pi\phantom\vert$};
    \end{tikzpicture}
  \end{minipage}
  \!\!\!\!;\qquad\quad
    %
            \begin{minipage}{23mm}
    \begin{tikzpicture}
      \def\dx{0.4}
      \def\dy{0.5}
      \coordinate (0) at (0,0);
      \coordinate (x1-0) at (0);
      \coordinate (x2-0) at ($(x1-0) + (0,\dy)$);
      \coordinate (x3-0) at ($(x2-0) + (0,\dy)$);
      \coordinate (x4-0) at ($(x3-0) + (0,\dy)$);
      \xdef\u{0}
      \foreach \t in {1,...,5} {%
        \foreach \k in {1,...,4} {%
          \coordinate (x\k-\t) at ($(x\k-\u) + (\dx,0)$);
          \ifnum\k=3%
            \ifnum\t=3
              \coordinate (x3-3) at ($(x3-3) + ({\dx/2},0)$);
            \fi
          \else
            \draw (x\k-\u) -- (x\k-\t);
          \fi
        }
        \xdef\u{\t}
      }
      \node at ($(x3-2) + (0,0.1)$) {$\vdots$};
      \foreach \k in {1,2,4} {%
        \draw (x\k-2) -- (x3-3);
        \filldraw [Z dot] (x\k-2) circle (3pt);
      }
      \draw (x3-3) -- (x3-4);
      \filldraw [X dot] (x3-3) circle (3pt);
      \filldraw [Z dot] (x3-4) circle (3pt) node [anchor=180,font=\footnotesize] {\;$\theta$}; 
      \filldraw [X dot] (x1-1) circle (3pt)
        node [anchor=north,font=\scriptsize,inner sep=0pt] {$\pi\!\!\:,\!\{\!\!\;s\!\!\;\}$};
    \end{tikzpicture}
  \end{minipage}
  \longrightarrow{}
    \begin{minipage}{40mm}
    \hspace*{0.5mm}
    \begin{tikzpicture}
      \def\dx{0.4}
      \def\dy{0.5}
      \coordinate (0) at (0,0);
      \coordinate (x1-0) at (0);
      \coordinate (x2-0) at ($(x1-0) + (0,\dy)$);
      \coordinate (x3-0) at ($(x2-0) + (0,\dy)$);
      \coordinate (x4-0) at ($(x3-0) + (0,\dy)$);
      \xdef\u{0}
      \foreach \t in {1,...,7} {%
        \foreach \k in {1,...,4} {%
					\coordinate (x\k-\t) at ($(x\k-\u) + (\dx,0)$);
          \ifnum\k=3%
          \else
            \draw (x\k-\u) -- (x\k-\t);
          \fi
        }
        \xdef\u{\t}
      }
      \node at ($(x3-1) + (0,0.1)$) {$\vdots$};
      \node at ($(x3-4) + (0,0.1)$) {$\vdots$};
      \coordinate (x3-2') at ($(x3-2) + (0,{\dy/2})$);  
      \coordinate (x3-2'') at ($(x3-5) + (0,{-\dy/2})$);
      \coordinate (x3-3') at ($(x3-2') + (\dx,0)$);  
      \coordinate (x3-3'') at ($(x3-2'') + (\dx,0)$);
      \draw (x3-2') -- (x3-3');
      \draw (x3-2'') -- (x3-3'');
			\foreach \t/\p in {1/',4/''} {%
				\foreach \u in {1,2,4} {%
					\draw (x\u-\t) -- (x3-2\p);
					\filldraw [Z dot] (x\u-\t) circle (3pt);
				}
			}
      \filldraw [Z dot] (x3-2') circle (3pt);
      \filldraw [Z dot] (x3-2'') circle (3pt);
      \filldraw [X dot] (x3-2') circle (3pt);
      \filldraw [X dot] (x3-2'') circle (3pt);      
      \filldraw [Z dot] (x3-3') circle (3pt) node [anchor=north,inner sep=1pt,font=\footnotesize] {$\phantom\vert\theta\phantom\vert$}; 
      \filldraw [Z dot] (x3-3'') circle (3pt) node [anchor=south,inner sep=-11pt,font=\footnotesize] {\;\;$-\!2\theta\!\!\:,\!\{\!\!\;s\!\!\;\}$}; 
      \filldraw [X dot] (x1-6) circle (3pt)
        node [anchor=north,font=\scriptsize] {$\pi\!\!\:,\!\{\!\!\;s\!\!\;\}$};
    \end{tikzpicture}
  \end{minipage}
  \mspace{-48mu}
  \;;
  \end{equation}

  \begin{equation}{}
    \label{cx-commute-phgadget-a}
    %
            \begin{minipage}{23mm}
    \begin{tikzpicture}
      \def\dx{0.4}
      \def\dy{0.5}
      \coordinate (0) at (0,0);
      \coordinate (x1-0) at (0);
      \coordinate (x2-0) at ($(x1-0) + (0,\dy)$);
      \coordinate (x3-0) at ($(x2-0) + (0,\dy)$);
      \coordinate (x4-0) at ($(x3-0) + (0,\dy)$);
      \xdef\u{0}
      \foreach \t in {1,...,5} {%
        \foreach \k in {1,...,4} {%
          \coordinate (x\k-\t) at ($(x\k-\u) + (\dx,0)$);
          \ifnum\k=3%
            \ifnum\t=3
              \coordinate (x3-3) at ($(x3-2) + ({2*\dx/3},0)$);
            \fi
          \else
            \draw (x\k-\u) -- (x\k-\t);
          \fi
        }
        \xdef\u{\t}
      }
      \node at ($(x3-2) + (0,0.1)$) {$\vdots$};
      \foreach \k in {2,4} {%
        \draw (x\k-2) -- (x3-3);
        \filldraw [Z dot] (x\k-2) circle (3pt);
      }
      \draw (x3-3) -- (x3-4);
      \filldraw [X dot] (x3-3) circle (3pt);
      \filldraw [Z dot] (x3-4) circle (3pt) node [anchor=180,font=\footnotesize] {\;$\theta$}; 
	  \draw (x1-1) -- (x2-1);
      \filldraw [Z dot] (x2-1) circle (3pt);
      \filldraw [X dot] (x1-1) circle (3pt);
    \end{tikzpicture}
  \end{minipage}
  \!\!\!\!\!\longrightarrow
  \begin{minipage}{23mm}
    \begin{tikzpicture}
      \def\dx{0.4}
      \def\dy{0.5}
      \coordinate (0) at (0,0);
      \coordinate (x1-0) at (0);
      \coordinate (x2-0) at ($(x1-0) + (0,\dy)$);
      \coordinate (x3-0) at ($(x2-0) + (0,\dy)$);
      \coordinate (x4-0) at ($(x3-0) + (0,\dy)$);
      \xdef\u{0}
      \foreach \t in {1,...,4} {%
        \foreach \k in {1,...,4} {%
          \coordinate (x\k-\t) at ($(x\k-\u) + (\dx,0)$);
          \ifnum\k=3%
            \ifnum\t=2
              \coordinate (x3-3) at ($(x3-2) + ({2*\dx/3},0)$);
            \fi
          \else
            \draw (x\k-\u) -- (x\k-\t);
          \fi
        }
        \xdef\u{\t}
      }
      \node at ($(x3-1) + (0,0.1)$) {$\vdots$};
      \foreach \k in {2,4} {%
        \draw (x\k-1) -- (x3-2);
        \filldraw [Z dot] (x\k-1) circle (3pt);
      }
      \draw (x3-2) -- (x3-3);
      \filldraw [X dot] (x3-2) circle (3pt);
      \filldraw [Z dot] (x3-3) circle (3pt) node [anchor=180,font=\footnotesize] {\;$\theta$}; 
	  \draw (x1-3) -- (x2-3);
      \filldraw [Z dot] (x2-3) circle (3pt);
      \filldraw [X dot] (x1-3) circle (3pt);
    \end{tikzpicture}
  \end{minipage}
  \!\!\!\!;
  \end{equation}

  \bigskip
  \begin{equation}{}
    \label{cx-commute-phgadget-b}
    %
            \begin{minipage}{23mm}
    \begin{tikzpicture}
      \def\dx{0.4}
      \def\dy{0.5}
      \coordinate (0) at (0,0);
      \coordinate (x1-0) at (0);
      \coordinate (x2-0) at ($(x1-0) + (0,\dy)$);
      \coordinate (x3-0) at ($(x2-0) + (0,\dy)$);
      \coordinate (x4-0) at ($(x3-0) + (0,\dy)$);
      \xdef\u{0}
      \foreach \t in {1,...,5} {%
        \foreach \k in {1,...,4} {%
          \coordinate (x\k-\t) at ($(x\k-\u) + (\dx,0)$);
          \ifnum\k=3%
            \ifnum\t=3
              \coordinate (x3-3) at ($(x3-2) + ({2*\dx/3},0)$);
            \fi
          \else
            \draw (x\k-\u) -- (x\k-\t);
          \fi
        }
        \xdef\u{\t}
      }
      \node at ($(x3-2) + (0,0.1)$) {$\vdots$};
      \foreach \k in {2,4} {%
        \draw (x\k-2) -- (x3-3);
        \filldraw [Z dot] (x\k-2) circle (3pt);
      }
      \draw (x3-3) -- (x3-4);
      \filldraw [X dot] (x3-3) circle (3pt);
      \filldraw [Z dot] (x3-4) circle (3pt) node [anchor=180,font=\footnotesize] {\;$\theta$}; 
	  \draw (x1-1) -- (x2-1);
      \filldraw [Z dot] (x1-1) circle (3pt);
      \filldraw [X dot] (x2-1) circle (3pt);
    \end{tikzpicture}
  \end{minipage}
  \!\!\!\!\!\longrightarrow\;
  \begin{minipage}{23mm}
    \begin{tikzpicture}
      \def\dx{0.4}
      \def\dy{0.5}
      \coordinate (0) at (0,0);
      \coordinate (x1-0) at (0);
      \coordinate (x2-0) at ($(x1-0) + (0,\dy)$);
      \coordinate (x3-0) at ($(x2-0) + (0,\dy)$);
      \coordinate (x4-0) at ($(x3-0) + (0,\dy)$);
      \xdef\u{0}
      \foreach \t in {1,...,4} {%
        \foreach \k in {1,...,4} {%
          \coordinate (x\k-\t) at ($(x\k-\u) + (\dx,0)$);
          \ifnum\k=3%
            \ifnum\t=2
              \coordinate (x3-3) at ($(x3-2) + ({2*\dx/3},0)$);
            \fi
          \else
            \draw (x\k-\u) -- (x\k-\t);
          \fi
        }
        \xdef\u{\t}
      }
      \node at ($(x3-1) + (0,0.1)$) {$\vdots$};
      \foreach \k in {1,2,4} {%
        \draw (x\k-1) -- (x3-2);
        \filldraw [Z dot] (x\k-1) circle (3pt);
      }
      \draw (x3-2) -- (x3-3);
      \filldraw [X dot] (x3-2) circle (3pt);
      \filldraw [Z dot] (x3-3) circle (3pt) node [anchor=180,font=\footnotesize] {\;$\theta$}; 
	  \draw (x1-3) -- (x2-3);
      \filldraw [Z dot] (x1-3) circle (3pt);
      \filldraw [X dot] (x2-3) circle (3pt);
    \end{tikzpicture}
  \end{minipage}
  \mspace{-18mu};\qquad
  \begin{minipage}{23mm}
    \begin{tikzpicture}
      \def\dx{0.4}
      \def\dy{0.5}
      \coordinate (0) at (0,0);
      \coordinate (x1-0) at (0);
      \coordinate (x2-0) at ($(x1-0) + (0,\dy)$);
      \coordinate (x3-0) at ($(x2-0) + (0,\dy)$);
      \coordinate (x4-0) at ($(x3-0) + (0,\dy)$);
      \xdef\u{0}
      \foreach \t in {1,...,5} {%
        \foreach \k in {1,...,4} {%
          \coordinate (x\k-\t) at ($(x\k-\u) + (\dx,0)$);
          \ifnum\k=3%
            \ifnum\t=3
              \coordinate (x3-3) at ($(x3-2) + ({2*\dx/3},0)$);
            \fi
          \else
            \draw (x\k-\u) -- (x\k-\t);
          \fi
        }
        \xdef\u{\t}
      }
      \node at ($(x3-2) + (0,0.1)$) {$\vdots$};
      \foreach \k in {1,2,4} {%
        \draw (x\k-2) -- (x3-3);
        \filldraw [Z dot] (x\k-2) circle (3pt);
      }
      \draw (x3-3) -- (x3-4);
      \filldraw [X dot] (x3-3) circle (3pt);
      \filldraw [Z dot] (x3-4) circle (3pt) node [anchor=180,font=\footnotesize] {\;$\theta$}; 
	  \draw (x1-1) -- (x2-1);
      \filldraw [Z dot] (x1-1) circle (3pt);
      \filldraw [X dot] (x2-1) circle (3pt);
    \end{tikzpicture}
  \end{minipage}
  \!\!\!\!\!\longrightarrow\;
  \begin{minipage}{23mm}
    \begin{tikzpicture}
      \def\dx{0.4}
      \def\dy{0.5}
      \coordinate (0) at (0,0);
      \coordinate (x1-0) at (0);
      \coordinate (x2-0) at ($(x1-0) + (0,\dy)$);
      \coordinate (x3-0) at ($(x2-0) + (0,\dy)$);
      \coordinate (x4-0) at ($(x3-0) + (0,\dy)$);
      \xdef\u{0}
      \foreach \t in {1,...,4} {%
        \foreach \k in {1,...,4} {%
          \coordinate (x\k-\t) at ($(x\k-\u) + (\dx,0)$);
          \ifnum\k=3%
            \ifnum\t=2
              \coordinate (x3-3) at ($(x3-2) + ({2*\dx/3},0)$);
            \fi
          \else
            \draw (x\k-\u) -- (x\k-\t);
          \fi
        }
        \xdef\u{\t}
      }
      \node at ($(x3-1) + (0,0.1)$) {$\vdots$};
      \foreach \k in {2,4} {%
        \draw (x\k-1) -- (x3-2);
        \filldraw [Z dot] (x\k-1) circle (3pt);
      }
      \draw (x3-2) -- (x3-3);
      \filldraw [X dot] (x3-2) circle (3pt);
      \filldraw [Z dot] (x3-3) circle (3pt) node [anchor=180,font=\footnotesize] {\;$\theta$}; 
	  \draw (x1-3) -- (x2-3);
      \filldraw [Z dot] (x1-3) circle (3pt);
      \filldraw [X dot] (x2-3) circle (3pt);
    \end{tikzpicture}
  \end{minipage}
  \!\!\!\!\!\!.
  \end{equation}

  \medskip
  --- After this, the main body of $\mathbf C$ will be a commuting circuit, consisting entirely of \pipp~operations.

\item
  Fuse together any $D_{\!\!\:S,k}$ operations for $k \leqslant 3$ which arise on a common subset $S$:
    \begin{equation}{}
    \label{gadget-fuse}
    %
              \begin{minipage}{40mm}
    \hspace*{0.5mm}
    \begin{tikzpicture}
      \def\dx{0.4}
      \def\dy{0.5}
      \coordinate (0) at (0,0);
      \coordinate (x1-0) at (0);
      \coordinate (x2-0) at ($(x1-0) + (0,\dy)$);
      \coordinate (x3-0) at ($(x2-0) + (0,\dy)$);
      \coordinate (x4-0) at ($(x3-0) + (0,\dy)$);
      \xdef\u{0}
      \foreach \t in {1,...,7} {%
        \foreach \k in {1,...,4} {%
	\coordinate (x\k-\t) at ($(x\k-\u) + (\dx,0)$);
          \ifnum\k=3%
          \else
            \draw (x\k-\u) -- (x\k-\t);
          \fi
        }
        \xdef\u{\t}
      }
      \node at ($(x3-1) + (0,0.1)$) {$\vdots$};
      \node at ($(x3-4) + (0,0.1)$) {$\vdots$};
      \coordinate (x3-2') at ($(x3-2) + (0,{-\dy/2})$);  
      \coordinate (x3-2'') at ($(x3-5) + (0,{-\dy/2})$);
      \coordinate (x3-3') at ($(x3-2') + (\dx,0)$);  
      \coordinate (x3-3'') at ($(x3-2'') + (\dx,0)$);
      \draw (x3-2') -- (x3-3');
      \draw (x3-2'') -- (x3-3'');
			\foreach \t/\p in {1/',4/''} {%
				\foreach \u in {1,2,4} {%
					\draw (x\u-\t) -- (x3-2\p);
					\filldraw [Z dot] (x\u-\t) circle (3pt);
				}
			}
      \filldraw [Z dot] (x3-2') circle (3pt);
      \filldraw [Z dot] (x3-2'') circle (3pt);
      \filldraw [X dot] (x3-2') circle (3pt);
      \filldraw [X dot] (x3-2'') circle (3pt);      
      \filldraw [Z dot] (x3-3') circle (3pt) node [anchor=south,inner sep=1pt,font=\footnotesize] {$\phantom\vert\alpha\phantom\vert$}; 
      \filldraw [Z dot] (x3-3'') circle (3pt) node [anchor=south,inner sep=-11pt,font=\footnotesize] {\;\;$\beta$}; 
    \end{tikzpicture}
  \end{minipage}
  \mspace{-36mu}\longrightarrow{}
    \begin{minipage}{40mm}
    \hspace*{0.5mm}
    \begin{tikzpicture}
      \def\dx{0.4}
      \def\dy{0.5}
      \coordinate (0) at (0,0);
      \coordinate (x1-0) at (0);
      \coordinate (x2-0) at ($(x1-0) + (0,\dy)$);
      \coordinate (x3-0) at ($(x2-0) + (0,\dy)$);
      \coordinate (x4-0) at ($(x3-0) + (0,\dy)$);
      \xdef\u{0}
      \foreach \t in {1,...,4} {%
        \foreach \k in {1,...,4} {%
					\coordinate (x\k-\t) at ($(x\k-\u) + (\dx,0)$);
          \ifnum\k=3%
          \else
            \draw (x\k-\u) -- (x\k-\t);
          \fi
        }
        \xdef\u{\t}
      }
      \node at ($(x3-1) + (0,0.1)$) {$\vdots$};
      \coordinate (x3-2') at ($(x3-2) + (0,{\dy/2})$);  
      \coordinate (x3-2'') at ($(x3-2) + (0,{-\dy/2})$);
      \coordinate (x3-3') at ($(x3-2') + (\dx,0)$);  
      \coordinate (x3-3'') at ($(x3-2'') + (\dx,0)$);
      \draw (x3-2'') -- (x3-3'');
			\foreach \t/\p in {1/''} {%
				\foreach \u in {1,2,4} {%
					\draw (x\u-\t) -- (x3-2\p);
					\filldraw [Z dot] (x\u-\t) circle (3pt);
				}
			}
      \filldraw [Z dot] (x3-2'') circle (3pt);
      \filldraw [X dot] (x3-2'') circle (3pt);      
      \filldraw [Z dot] (x3-3'') circle (3pt) node [anchor=south,inner sep=-11pt,font=\footnotesize] {$\alpha{+}\beta$}; 
    \end{tikzpicture}
  \end{minipage}
  \mspace{-72mu}
  \!\!\!\!\!\!,
  \end{equation}
  and apply Eqn.~\eqref{eqn:phaseParityGateIdentity} if possible to reduce or eliminate these gadgets when possible (in particular, removing entirely any operations $D_{S,k}$ for $k \leqslant 0$).
  Commute any  operations $D_{S,k}$ for $k \in \{1,2\}$ to the final Clifford phase of $\mathbf C$.

\item
  Commute any classically-controlled $\mathcal D_{\!\!\:S,\!\;t}$ gates to the beginning of the final Clifford phase of $\mathbf C$,  in layers according to the classical control bit involved.
\end{enumerate}
If the original circuit $\mathbf C$ had $m$ Hadamard gates, the above procedure realises a transformation
\vspace*{-.5ex}
\begin{equation}
    \mathbf C
  \;\longrightarrow\;
    \mathbf{Cl}_1 \, \mathbf D_m \cdots \,\mathbf D_2 \,\mathbf D_1 \,\mathbf D_0 \, \mathbf {Cl}_0\;,
\end{equation}
\vspace*{-3ex}

\noindent
where (reading right-to-left) $\mathbf {Cl}_0$ and $\mathbf{Cl}_1$ are the initial and final Clifford phases of the circuit respectively; $\mathbf D_0$ is a circuit realising a $\mathcal D_3^n$ operation; and the circuits $\mathbf D_j$ (for $1 \le j \le m$) consist of
the $j\textsuperscript{th}$ measurement in the $\ket{\texttt{\raisebox{0.25ex}{+}\llap{\raisebox{-0.45ex}{-}}}}$  basis with outcome $s_{\!j}$ (denoted in ZX by a green ``$\pi,\!\!\:\{s_j\}$'' node), followed by $\mathcal D_k^n$ operations conditioned on the outcome $s_{\!j}$\,.
We refer to a circuit with this structure as Cl-D-Cl (pronounced ``cliddicle'') form.

In the circuits produced by this procedure, all of the non-Clifford operations are $\mathcal D_3^n$ operations in $\mathbf D_0$.
In particular, each of them is a \pipp\ operation $D_{\!\!\:S,\!\;3}$ --- which can be realised by CNOT gates and a single $T$ gate.
This motivates the question of how to simplify a circuit consisting entirely of \pipp\ operations.
In some instances, we find a signifcant reduction in the $T$-count simply by representing the contributions to the $T$-count entirely in terms of \pipp\ operations, and ``fusing'' these operations together using $D_{\!\!\:S,3}^2 = D_{\!\!\:S,2}$ for any subset $S \subseteq [n]$.
However, in general it is useful to consider what other reduction techniques can be used to simplify homogeneous circuits of \mbox{``$T$-gadgets''}.
In the setting of simplifying such a homogeneous circuit, we may easily make used of the TODD subroutine of Heyfron and Campbell~\cite{HC-2018}; to this we add another technique which \textbf{(a)}~in some cases yields $T$-counts which are better than any previously known results, whether or not one also uses TODD; and \textbf{(b)}~is in principle extensible, allowing for the possibility of further improvements through improved algorithms for this sub-problem.

\vspace*{-1ex}
\subsection{Phase Gadget Elimination tactics}
\label{sec:PHAGE}

Reducing the $T$-count while preserving the meaning of a circuit, implicitly involves applying a mathematical identity, possibly passing temporarily through different representations of these circuits.
(These are often identities of diagonal unitary circuits~\cite{AMM-2014,AM-2019}, though not always~\cite{GKMR-2014,KvdW-2019}.)
In the special case of reductions by identities of \pipp\ operations, these  may in principle be described in terms of a commuting product of operations which are proportional to the identity operator.
We now consider a general approach to the reduction of such circuits by an analysis of families of non-trivial circuits which realise the identity transformation.

\subsubsection{PHAGE tactics}
\label{sec:PHAGE-tactics-in-general}

In the following, we use the terms ``identity of \pipp\ operations'' or ``identity of phase gadgets'' (or simply ``an identity'') to refer to a circuit $\mathcal J$, whose $T$-count is at least $1$ but which nevertheless realises the identity operation.
We make the simple observation that for any family $\mathcal F$ of such ``identities'', there is an associated ``tactic'' to reduce the $T$-count in a homogeneous circuit $\mathbf C$ of such phase gadgets.
For a given subset $S \subseteq [n]$, this tactic is as follows:
\begin{enumerate}
\item
  Determine whether there is an identity $\!\!\mathcal J \in \mathcal F$, such that $\mathbf C$ contains at least half of the $T$-gadgets (or alternatively the inverses of $T$-gadgets) which occur in $\!\!\mathcal J$.
\item
  For any such identity $\!\!\mathcal J$, compute a circuit $\mathbf C_{\!\!\!\mathcal J}$ as the product of $\mathbf C$ and $\mathcal J^{-1}$ (simplifying this circuit by fusing phase gadgets, possibly cancelling $T$-gadgets or otherwise turning into Clifford gadgets.
  Determine the resulting $T$-count.
\item
  Replace $\mathbf C$ with the circuit $\mathbf C_{\!\!\!\mathcal J}$ with the smallest $T$-count, if this is less than the $T$-count of $\mathbf C$ itself.
\end{enumerate}
We call such a procedure a ``Phase Gadget Elimination'' (or PHAGE) tactic.
This procedure is apparently ``greedy'', in that it selects the circuit $\mathbf C_{\!\!\!\mathcal J}$ which minimises the $T$ count after a single application.
It is possible to take a subtler view, in which the family $\mathcal F$ of identities which may be deployed is only implicitly defined, in a way which may depend on the particular structure of $\mathbf C$ or how it acts on $S$.
The main principle of a PHAGE tactic is in that it selects a particular way to reduce the $T$ count based on the independent comparison of one or more different possible identities after some bounded-time procedure.

In principle, the {T-optimise} subroutine of Ref.~\cite{AM-2019} and the TOOL and TODD subroutines of Ref.~\cite{HC-2018} may be interpreted as algorithms to deploy one or more PHAGE tactics, possibly more than once in sequence, and possibly with a random choice of family $\mathcal F$ (and where $\mathcal F$ itself may on occasion be a singleton set).
This approach to $T$-count reduction can be distinguished from that of Ref.~\cite{KvdW-2019}, in which phases may in principle be aggregated one at a time in circuits which are unitary but not diagonal.
While such techniques seem fruitful, we suggest that investigation of identities on \pp\ operations --- and the way in which such identities may be deployed as PHAGE tactics --- may provide a complementary approach to reduce the $T$-count.

The difficulty in reducing the $T$-count arises from the fact that there are a very large number of identities of \pipp\ operations, and a large number of subsets $S \subseteq [n]$ which one may consider.
A na\"{\i}ve approach is simply to let $\mathcal F$ be the family of all identities on $n$ qubits: as this set is exponentially large, the associated PHAGE tactic is infeasible to use for large $n$.
The difficulty is in formulating a successful \emph{strategy}, in which one selects a more appropriately-sized family $\mathcal F$ of identities to try on a particular circuit or subsystem $S$.
The question is then one of having a range of tactics which one may efficiently explore, and also successfully deploy, to reduce the $T$-count.

\subsubsection{Spider nest identities}

Our results depend on a PHAGE tactic --- \emph{i.e.},~an~approach to attempt to reduce homogeneous circuits of phase gadgets --- which is induced by a simple family of identities of \pipp\ operations, which we now describe.
While these identities are in a sense elementary, to our knowledge they have not previously been noted in the literature (though Maslov and Roetteler~\cite[Theorem~2]{MR-2018} make similar observations for operations in $\mathcal D_2^n$).

The identities can be composed from some specific homogeneous circuits which realise the identity (essentially a set of generators for the group of functions $\mathcal C_n$ described by Amy and Mosca~\cite{AM-2019}), which involve a single $T$-phase $n$-gadget for $n \geq 4$, and phase $k$-gadgets with $k \leq 3$: 
\vspace*{1ex}
\begin{equation}
  \label{gadgetdec}
  %
              n \;\left\{\;
  \begin{aligned}
  ~\\[-6ex]
  \begin{tikzpicture}[scale=1.125]
    \def\dx{0.25}
    \def\dy{0.875}
    \coordinate (x0-0) at (0,0);
    \coordinate (xn-0) at ($(x0-0) + (0,\dy)$);
    \coordinate (x2n-0) at ($(xn-0) + (0,\dy)$);
    \xdef\u{0}
    \foreach \t in {1,...,38} {%
      \foreach \v in {x0,xn,x2n} {%
        \coordinate (\v-\t) at ($(\v-\u) + (\dx,0)$);
        \draw (\v-\u) -- (\v-\t);
      }
      \xdef\u{\t}
    }
    \node at ($(x0-1)!0.625!(xn-1)$) {$\vdots$};
    \node at ($(xn-1)!0.625!(x2n-1)$) {$\vdots$};
    \foreach \v in {x0,xn,x2n} {%
      \node [style=Z dot] (\v-5) at (\v-5) {};
      \node at (\v-5) [anchor=270] {$\scriptstyle (n{-}2)(n{-}3)\tfrac{\pi}{8}$};
    }
    \node [style=Z dot] (x0-9) at (x0-9) {};
    \node [style=Z dot] (xn-9) at (xn-9) {};
    \node [style=X dot] (p) at ($(x0-10)!0.5!(xn-10)$) {};
    \draw (x0-9) -- (p);
    \draw (xn-9) -- (p);
    \node [style=Z dot] (ph) at ($(x0-11)!0.5!(xn-12)$) {};
    \node at (ph) [anchor=180] {$\scriptstyle \:\:-(n{-}3)\tfrac{\pi}{4}$};
    \draw (p) -- (ph);
    \node [style=Z dot] (xn-10) at (xn-10) {};
    \node [style=Z dot] (x2n-10) at (x2n-10) {};
    \node [style=X dot] (p) at ($(xn-11)!0.5!(x2n-11)$) {};
    \draw (xn-10) -- (p);
    \draw (x2n-10) -- (p);
    \node [style=Z dot] (ph) at ($(xn-12)!0.5!(x2n-13)$) {};
    \node at (ph) [anchor=180] {$\scriptstyle \:\:-(n{-}3)\tfrac{\pi}{4}$};
    \draw (p) -- (ph);
    \node [style=Z dot] (x0-18) at (x0-18) {};
    \node [style=Z dot] (x2n-18) at (x2n-18) {};
    \node [style=X dot] (p) at ($(x0-19)!0.666!(x2n-19)$) {};
    \draw (x0-18) -- (p);
    \draw (x2n-18) -- (p);
    \node [style=Z dot] (ph) at ($(x0-20)!0.666!(x2n-21)$) {};
    \node at (ph) [anchor=180] {$\scriptstyle \:\:-(n{-}3)\tfrac{\pi}{4}$};
    \draw (p) -- (ph);
    \node [style=Z dot] (x0-27) at (x0-27) {};
    \node [style=Z dot] (xn-27) at (xn-27) {};
    \node [style=Z dot] (x2n-27) at (x2n-27) {};
    \node [style=X dot] (p) at ($(x0-28)!0.333!(x2n-29)$) {};
    \draw (x0-27) -- (p);
    \draw (xn-27) -- (p);
    \draw (x2n-27) -- (p);
    \node [style=Z dot] (ph) at ($(x0-29)!0.333!(x2n-31)$) {};
    \node at (ph) [anchor=180] {$\scriptstyle \,\,\tfrac{\pi}{4}$};
    \draw (p) -- (ph);
%
%
    \node (p) [style=X dot] at ($(x0-35)!0.3125!(x2n-35)$) {};
    \node at ($(x0-34)!0.625!(xn-34)$) {$\vdots$};
    \node at ($(xn-34)!0.5875!(x2n-34)$) {$\vdots$};
    \foreach \v in {x0,xn,x2n} {%
      \node [style=Z dot] (\v-34) at (\v-34) {};
      \draw (\v-34) -- (p);
    }
    \node (ph) [style=Z dot] at ($(p) + (\dx,0)$) {};
    \node at (ph) [anchor=180] {$\scriptstyle \:-\!\!\:\tfrac{\pi}{4}$};
    \draw (p) -- (ph);
    \end{tikzpicture}
  \\[-3.5ex]~
  \end{aligned}
  \right.  
  \;\;\;
  \propto
  \;\;
\idop^{\otimes n}
 .
\end{equation}
\vspace*{-1ex}

\noindent
Let $G_n$ denote the $n$-qubit circuit on the left-hand side of Eqn.~\eqref{gadgetdec}.
This consists of a $1$-gadget with phase angle $(n{-}2)(n{-}3)\tfrac{\pi}{8}$ on each line, a $2$-gadget on each pair of lines with phase angle $-(n{-}3)\tfrac{\pi}{4}$, and a $3$-gadget with phase angle $\tfrac{\pi}{4}$ on each subset of three lines, and finally an $n$-gadget with phase angle $-\tfrac{\pi}{4}$.
(We prove this identity in Appendix~\ref{sec:n-gadgetDecomposition}.)

We refer to identities of the form of Eqn.~\eqref{gadgetdec} --- and any other identity involving a small number of large phase-gadget ``spiders'' together with a large number of smaller phase-gadget ``spiders'' --- as a \emph{spider nest} identity.

\vspace*{-1ex}
\paragraph{Features of simple spider nest identities.}
 
Our results in fact make only limited (but crucial) use of spider nest identities.
As it seems likely to us that these identities can be used to greater effect than we have in our results, we now describe some features of these identities in general.
Let $\mathcal N_S$ represent the homogeneous circuit of phase gadgets on the left-hand side of Eqn.~\eqref{gadgetdec}, acting on a set $S = \{1,2,\ldots,n\}$ of cardinality $n$.
Note the following features of $\mathcal N_S$:
\begin{itemize}
\item  
  If $n=4$, then it is essentially the same as the rule $R_{13}$ given in  \cite{ACR-2018}, and also Eqn.~\eqref{eqn:phaseParityGateIdentity}.
\item
  If $n \equiv 1\!\!\pmod{4}$ or $n \equiv 3 \!\!\pmod{4}$, all of the $2$-gadgets in Eqn.~\eqref{gadgetdec} are Clifford-phase gadgets, which do not contribute to the $T$-count.
\item
  If $n \equiv 3 \!\!\pmod{4}$ or $n \equiv 2 \!\!\pmod{4}$, all of the $1$-gadgets in Eqn.~\eqref{gadgetdec} are Clifford-phase gadgets, which again do not contribute to the $T$-count.
\end{itemize}
\noindent
For a fixed value of $n$, and a $T$-phase gadget on $1$ to $3$ qubits, there is a question of whether or not such a gadget is involved in $\mathcal N_S$\,, as a number of the phase gadgets involved are Clifford gadgets instead.
This would affect the number of phase gadgets involved in the identity, and therefore in a sense how easily one may find subcircuits on which the induced PHAGE tactic could be fruitfully used.
Let $\mathbf T_n$ denote the $T$-count of $\mathcal N_S$\,: then 
\begin{equation}
  \label{tcount}
  \mathbf T_n
=
  \begin{cases}
    \tfrac{1}{6}n(n^2+5),       & \text{for $n \equiv 0 \pmod{4}$}; \\[0.5ex]
    \tfrac{1}{6}n(n^2-3n+8),    & \text{for $n \equiv 1 \pmod{4}$}; \\[0.5ex]
    \tfrac{1}{6}n(n^2-1),       & \text{for $n \equiv 2 \pmod{4}$}; \\[0.5ex]
    \tfrac{1}{6}n(n^2-3n+2),    & \text{for $n \equiv 3 \pmod{4}$}.
  \end{cases}
\end{equation}
In some cases, it is also possible to compose two or more circuits $\mathcal N_{S_j}$ (or their inverses) to obtain a ``composite'' spider nest identity which has a small $T$-count.
This may be helpful for finding simplifications of circuits, through PHAGE tactics which use such an identity.
For instance, consider the specific circuit $\mathcal N_S \, \mathcal N_{S'}^{-1}$ where $\lvert S \rvert \geqslant 5$ and $S' = S \setminus\! \{ r \}$ for some $r \in S$.
In this composite circuit, various small $T$-gadgets of $\mathcal N_{S'}^{-1}$ and of $\mathcal N_S$ cancel each other out, yielding a circuit of the form:
\vspace*{1ex}
\begin{equation}{}
  \label{2gadgetdec}
  \mspace{-48mu}
  %
            \begin{aligned}
  ~\\[-6ex]
  \begin{tikzpicture}[scale=1]
    \def\dx{0.25}
    \def\dy{0.875}
    \coordinate (x0-1) at (0,0);
    \coordinate (xn-1) at ($(x0-0) + (0,\dy)$);
    \coordinate (x2n-1) at ($(xn-0) + (0,\dy)$);
    \coordinate (X-1) at ($(x2n-0) + (0,\dy)$);
    \xdef\u{1}
    \foreach \t in {1,...,56} {%
      \foreach \v in {X,x0,xn,x2n} {%
        \coordinate (\v-\t) at ($(\v-\u) + (\dx,0)$);
        \draw (\v-\u) -- (\v-\t);
      }
      \xdef\u{\t}
    }
    \node at ($(x0-2)!0.625!(xn-1)$) {$\vdots$};
    \node at ($(xn-2)!0.625!(x2n-1)$) {$\vdots$};
    \foreach \v in {x0,xn,x2n} {%
      \node [style=Z dot] (\v-5) at (\v-5) {};
      \node at (\v-5) [anchor=90] {$\scriptstyle (n{-}3)\tfrac{\pi}{4}$};
    }
    \node [style=Z dot] (X-5) at (X-15) {};
    \node [anchor=315] at ($(x0-3)!0.5!(x0-3)$) {$n\!\:{-}\!\!\;1 \,\left\{\;\begin{array}{c}\\[11ex] \end{array}\right.$};
    \node at (X-5) [anchor=270] {$\scriptstyle (n{-}2)(n{-}3)\tfrac{\pi}{8}$};
    \node [style=Z dot] (x0-8) at (x0-8) {};
    \node [style=Z dot] (xn-8) at (xn-8) {};
    \node [style=X dot] (p) at ($(x0-9)!0.5!(xn-9)$) {};
    \draw (x0-8) -- (p);
    \draw (xn-8) -- (p);
    \node [style=Z dot] (ph) at ($(x0-10)!0.5!(xn-11)$) {};
    \node at (ph) [anchor=180] {$\scriptstyle \:\:-\!\!\;\tfrac{\pi}{4}$};
    \draw (p) -- (ph);
    \node [style=Z dot] (xn-9) at (xn-9) {};
    \node [style=Z dot] (x2n-9) at (x2n-9) {};
    \node [style=X dot] (p) at ($(xn-10)!0.5!(x2n-10)$) {};
    \draw (xn-9) -- (p);
    \draw (x2n-9) -- (p);
    \node [style=Z dot] (ph) at ($(xn-11)!0.5!(x2n-12)$) {};
    \node at (ph) [anchor=180] {$\scriptstyle \:\:-\!\!\;\tfrac{\pi}{4}$};
    \draw (p) -- (ph);
    \node [style=Z dot] (x0-15) at (x0-15) {};
    \node [style=Z dot] (x2n-15) at (x2n-15) {};
    \node [style=X dot] (p) at ($(x0-16)!0.666!(x2n-16)$) {};
    \draw (x0-15) -- (p);
    \draw (x2n-15) -- (p);
    \node [style=Z dot] (ph) at ($(x0-17)!0.666!(x2n-18)$) {};
    \node at (ph) [anchor=180] {$\scriptstyle \:\:-\!\!\;\tfrac{\pi}{4}$};
    \draw (p) -- (ph);
    \node [style=Z dot] (x0-20) at (x0-20) {};
    \node [style=Z dot] (X-20) at (X-20) {};
    \node [style=X dot] (p) at ($(x0-21)!0.1666!(X-21)$) {};
    \draw (x0-20) -- (p);
    \draw (X-20) -- (p);
    \node [style=Z dot] (ph) at ($(x0-22)!0.1666!(X-25)$) {};
    \node at (ph) [anchor=180] {$\scriptstyle \:\:-(n{-}3)\tfrac{\pi}{4}$};
    \draw (p) -- (ph);
    \node [style=Z dot] (xn-22) at (xn-22) {};
    \node [style=Z dot] (X-22) at (X-22) {};
    \node [style=X dot] (p) at ($(xn-23)!0.25!(X-23)$) {};
    \draw (xn-22) -- (p);
    \draw (X-22) -- (p);
    \node [style=Z dot] (ph) at ($(xn-24)!0.25!(X-26)$) {};
    \node at (ph) [anchor=180] {$\scriptstyle \:\:-(n{-}3)\tfrac{\pi}{4}$};
    \draw (p) -- (ph);
    \node [style=Z dot] (x2n-24) at (x2n-24) {};
    \node [style=Z dot] (X-24) at (X-24) {};
    \node [style=X dot] (p) at ($(x2n-25)!0.5!(X-26)$) {};
    \draw (x2n-24) -- (p);
    \draw (X-24) -- (p);
    \node [style=Z dot] (ph) at ($(x2n-27)!0.5!(X-27)$) {};
    \node at (ph) [anchor=180] {$\scriptstyle \:\:-(n{-}3)\tfrac{\pi}{4}$};
    \draw (p) -- (ph);
    \node [style=Z dot] (X-33) at (X-33) {};
    \node [style=Z dot] (x0-33) at (x0-33) {};
    \node [style=Z dot] (xn-33) at (xn-33) {};
    \node [style=X dot] (p) at ($(x0-34)!0.5!(xn-35)$) {};
    \draw (X-33) -- (p);
    \draw (x0-33) -- (p);
    \draw (xn-33) -- (p);
    \node [style=Z dot] (ph) at ($(x0-36)!0.5!(xn-36)$) {};
    \node at (ph) [anchor=180] {$\scriptstyle \;\tfrac{\pi}{4}$};
    \draw (p) -- (ph);
    \node [style=Z dot] (X-35) at (X-35) {};
    \node [style=Z dot] (xn-35) at (xn-35) {};
    \node [style=Z dot] (x2n-35) at (x2n-35) {};
    \node [style=X dot] (p) at ($(xn-36)!0.5!(x2n-37)$) {};
    \draw (X-35) -- (p);
    \draw (xn-35) -- (p);
    \draw (x2n-35) -- (p);
    \node [style=Z dot] (ph) at ($(xn-38)!0.5!(x2n-38)$) {};
    \node at (ph) [anchor=180] {$\scriptstyle \;\tfrac{\pi}{4}$};
    \draw (p) -- (ph);
    \node [style=Z dot] (X-40) at (X-40) {};
    \node [style=Z dot] (x0-40) at (x0-40) {};
    \node [style=Z dot] (x2n-40) at (x2n-40) {};
    \node [style=X dot] (p) at ($(x0-41)!0.75!(x2n-42)$) {};
    \draw (X-40) -- (p);
    \draw (x0-40) -- (p);
    \draw (x2n-40) -- (p);
    \node [style=Z dot] (ph) at ($(x0-42)!0.75!(x2n-44)$) {};
    \node at (ph) [anchor=180] {$\scriptstyle \;\tfrac{\pi}{4}$};
    \draw (p) -- (ph);
    \node (p) [style=X dot] at ($(x0-47)!0.5!(xn-48)$) {};
    \node at ($(x0-46)!0.625!(xn-46)$) {$\vdots$};
    \node at ($(xn-46)!0.5875!(x2n-46)$) {$\vdots$};
    \foreach \v in {x0,xn,x2n} {%
      \node [style=Z dot] (\v-46) at (\v-46) {};
      \draw (\v-46) -- (p);
    }
    \node (ph) [style=Z dot] at ($(p) + (1.5*\dx,0)$) {};
    \node at (ph) [anchor=180] {$\scriptstyle \:-\!\!\:\tfrac{\pi}{4}$};
    \draw (p) -- (ph);
    \node (p) [style=X dot] at ($(x2n-54)!0.325!(xn-55)$) {};
    \node at ($(x0-52)!0.625!(xn-52)$) {$\vdots$};
    \node at ($(xn-52)!0.5875!(x2n-52)$) {$\vdots$};
    \foreach \v in {X,x0,xn,x2n} {%
      \node [style=Z dot] (\v-52) at (\v-52) {};
      \draw (\v-52) -- (p);
    }
    \node (ph) [style=Z dot] at ($(p) + (1.5*\dx,0)$) {};
    \node at (ph) [anchor=180] {$\scriptstyle \;\tfrac{\pi}{4}$};
    \draw (p) -- (ph);
    \end{tikzpicture}
  \\[-3.5ex]~
  \end{aligned}
\!\!.
  \mspace{-12mu}
\end{equation}
Let $r = S \setminus S'$ represent the top qubit in the diagram above.
The purpose of composing $\mathcal N_{S}$ with $\mathcal N_{S'}^{-1}$ is to fuse the various phase gadgets together (as we have above) to cancel the majority of the $3$-gadgets on $S$ against the $3$-gadgets on $S' \prsubset S$, and potentially to cancel almost all of the $1$-gadgets on $S$ as well. 
We are then left with whichever $1$- and $2$-gadgets on $S'$ are left uncanceled, a collection of gadgets from $\mathcal N_S$ involving $r$ interacting with some one or two qubits of $S'$, and the large phase gadgets acting on all of $S'$ and $S$ respectively.
If $\tilde{\mathbf T}_n$ denotes the $T$-count of the circuit above, we then have
\begin{equation}
  \label{tcount2}
  \tilde{\mathbf T}_n
=
  \begin{cases}
    n^2 - n + 2 + \delta_n & \text{for $n$ even}; \\[0.5ex]
    n^2 - 3n + 4 + \delta_n           & \text{for $n$ odd},
  \end{cases}
\end{equation}
where $\delta_n = 1$ if $n \equiv 0$ or $n \equiv 1$ modulo $4$, and $\delta_n = 0$ if $n \equiv 2$ or $n \equiv 3$ modulo $4$ (determining whether the $1$-gadget on qubit $r$ has $T$-count one or zero).
As this is asymptotically smaller than $\mathbf T_n$, one may see how it could be easier to find scenarios in which a single application of the identity $\mathcal N_S \mathcal N_{S'}^{-1} \propto \idop$ is beneficial.
(This most be weighed against the prospect that the asymmetry between the qubits in Eqn.~\eqref{2gadgetdec} will imply that the structure of the input circuit will play a role in whether it is likely to be useful.)

\subsubsection{Two na\"{\i}ve spider-nest PHAGE tactics}

We now describe the way in which we use spider nest identities to obtain our results.
This involves two simple PHAGE tactics,  relative to the scheme described in Section~\ref{sec:PHAGE-tactics-in-general}, using distinct families $\mathcal F_1, \mathcal F_2$ of spider nest identities.

\medskip
\begin{tactic*}[``STOMP~4'']
  For a set $S = \{q_1, q_2, q_3, q_4\}$, apply the PHAGE tactic associated with the family $\mathcal F_1 = \{ \mathcal N_S \}$ on the set $S$.
\end{tactic*}

\smallskip
\begin{tactic*}[``STOMP~5'']
  For a set $S = \{q_1, q_2, q_3, q_4, q_5\}$, apply the PHAGE tactic associated with the family $\mathcal F_2$ consisting of the $63$ different identities $\mathcal N_S^{p_0} \mathcal N_{S_1}^{p_1} \mathcal N_{S_2}^{p_2} \mathcal N_{S_3}^{p_3} \mathcal N_{S_4}^{p_4} \mathcal N_{S_5}^{p_5}$, where we define $S_j = S \setminus \{ j \}$ for each $1 \leqslant j \leqslant 5$, and where $p_0p_1p_2p_3p_4p_5 \in \{0,1\}^6$ is not all zero.
\end{tactic*}

\smallskip
\noindent
These PHAGE tactics do not exploit many properties of the spider nest identities described above: they consist of a brute-force application of (possibly composite) spider nest identities on small subsystems.
The tactic STOMP~5 in particular is motivated in part by the lower $T$-count involved in the composite spider nest identity of Eqn.~\eqref{2gadgetdec}: by testing many such composites, we attempt to find local opportunities to reduce the $T$-count.
The strategies which use use to deploy them are also very simple: on any homogenous circuit $\mathbf C$ of \pipp\ operations on $n$ qubits, first we apply STOMP~4 on all subsets of size $4$ in some order, and then we apply STOMP~5 on all subsets of size $5$ in some order.
(This is somewhat redundant, as $\mathcal F_2$ contains five different identities $\mathcal N_{S_j}$ for $1 \le j \le 5$; we may simplify this by requiring that $p_0 p_1 p_2 p_3 p_4 p_5$ have Hamming weight at least $2$, replacing $\mathcal F_2$ with a family $\mathcal F'_2$ of a mere 58 identities.)

As we show in Section~\ref{sec:results}, in many cases we obtain the best known $T$-count for a number of circuits.
Even so, our result may be considered only a proof of principle of the usefulness of spider nest identities --- a more sophisticated application of them may well yield superior results to those we present below.

\vspace*{-1ex}
\subsection{Analysis of a procedure to reduce $T$-count}

We now describe the reduction procedure used in our results.
Suppose that we are given a circuit $\mathbf C$ on $n$ qubits, over the gate-set  $\bigl\{X, \mathrm{CNOT}, \mathrm{CCNOT}, Z, \mathrm{C}Z, \mathrm{CC}Z, H, S, T, \mathrm{SWAP}\bigr\}$.

\subsubsection{The reduction procedure }

We perform the following transformations on $\mathbf C$.
\def\customenumi{\arabic{enumi}.\phantom*}
\begin{enumerate}[label=\protect\customenumi]%
\item
  Reduce the circuit $\mathbf C$ to a Cl-D-Cl form, using the procedure described in Section~\ref{sec:reductionToDk}.
  This serves to isolate a homogeneous circuit of commuting \pipp\ operations, with the rest of the circuit consisting of Clifford group operations (possibly conditioned on the outcomes of measurements).
  This yields a circuit $\mathbf C'$ on $N \geqslant n$ qubits.
\item
  Perform the  PHAGE tactic STOMP~4 on all subsets of size $4$, in some sequence, from the $N$ qubits on which $\mathbf C'$ acts.
  Call the resulting circuit $\mathbf C''$.
\item
  Perform the PHAGE tactic STOMP~5  on all subsets of size $5$, in some sequence, from the $N$ qubits on which $\mathbf C''$ acts.
  Call the resulting circuit $\mathbf C'''$.
  \def\customenumi{\arabic{enumi}*.}
\item
  Perform TODD on $\mathbf C'''$ some constant number of times, independently; output the circuit which has the smallest $T$-count from among these three runs.
  (Our results used the best outcome from 3--40 independent executions of TODD for each circuit.)
\end{enumerate}

Note that $N$, the number of qubits of the circuit produced as output, is a function of how many Hadamard gates are either involved in $\mathbf C$ or are introduced from the decomposition of CCNOT gates.
More precisely, it also depends on how many of these gates can be commuted from the ``main body'' of $\mathbf C$ to the initial or final Clifford stages.
Thus, for a circuit consisting of $M$ gates, a bound which is substantially better than $N \le n + M$ will be difficult without some knowledge of the structure of $\mathbf C$.
In several cases, we find that many or all of the Hadamard gates introduced by decomposing CCNOT gates can be eliminated: so, $N \le n + M$ is likely a loose upper bound in practical circumstances.

\subsubsection{Remarks on the TODD subroutine}

The final step involves the subroutine TODD described by Campbell and Heyfron~\cite{HC-2018}, for the simple reason that this subroutine is effective at reducing $T$-count without impacting the asymptotic run-time of our algorithm.
It also allows us to demonstrate how using our techniques in conjunction with TODD in some cases yields a result which is better than those found to date using TODD alone.

Heyfron and Campbell bound the run-time of the TODD subroutine as $O(N^3 t^2 + N t^3)$ for $t$ the initial $T$-count of the circuit --- see Ref.~\cite[Eqn.~53]{HC-2018}.
The number of times TODD is invoked for a given circuit is somewhat arbitrary.
As it is a randomised algorithm, it will yield different results in different invocations; and as it is difficult to determine when one has obtained a circuit with optimal (or approximately optimal) $T$-count, one might imagine in principle that running it a larger number of times might eventually yield a better result.
As we show below, the run-time analysis of our algorithm would not be affected were we to run TODD for each circuit $O(\log M)$ times; in practise we contented ourselves with at most 40 times, and in fact at most 3 times for each circuit.

\subsubsection{Run-time analysis}

Our procedure runs in time polynomial in the number of gates $M$ of the input circuit, and can be realised in a run-time which is only slightly larger than the asymptotic upper bound of the TODD subroutine.\footnote{%
  N.B.~The account below of the run-time of our techniques differs from the run-time of the implementation which we used in practise, which used a somewhat less efficient means of applying spider-nest identities.
}
We may describe the asymptotic run-time of each of the steps of our procedure, as follows:
\vspace*{.375ex}
\begin{itemize}
\item  
  Step~1 involves operations which involve simple decompositions of gates, or commutations of pairs of gates, in the circuit, and so runs in time $O(M^2)$.
  As a part of this run-time cost (in time $O(t \log t)$), we may create a tree structure (with $t$ elements) storing the $T$-gadgets in the homogeneous circuit.
\item
  Steps~2 and~3 involve determining whether an identity on $4$- or $5$-qubit subsystems of $N$ qubit homogeneous circuits lead to $T$-count reductions.
  As each identity has constant size, the run-time for this is governed by the number of such subsystems, times the search time for a tree of size $t$, or $O(N^4 \log t)$ and $O(N^5 \log t)$ respectively, where $t$ is the initial $T$-count of the circuit.
\item
  Finally, TODD runs in time $O(N^3 t^2 + N t^3)$.
\end{itemize}
\vspace*{.375ex}

\noindent Thus, our procedure runs in time $O(M^2 + N^5 \log t + N^3 t^2 + N t^3)$.

Consider a family of circuits $\{ \mathbf C_n \}_{n \in \mathbb N}$, with at least one operation on each qubit (so that $M \geqslant \tfrac{1}{3}n$), and in which some constant fraction of the gates of $\mathbf C_n$ are CCNOT gates, whose decomposition in Step~1 introduces Hadamard gates.
Then we have $N = n + \alpha M$ for some $0 \leqslant \alpha \leqslant 2$, and $t = \beta M$ for $0 \leqslant \beta \leqslant 7$.
Our procedure then runs in time $O(M^5 \log M)$, which is dominated by the asymptotic upper bound on the run-time of STOMP~5, and  up to a log-factor is the same as the bound on the run-time of Step~4 (which applies TODD).
\begin{table}[p]
  \catcode`\^=13
  \def^{\textsuperscript}
  \catcode`\_=13
  \def_{\textsubscript}
  \def\best#1!{\textbf{\color{red!40!black}\llap{*\;\!}{#1}}}
  \def\tie#1.{\textbf{\color{green!40!black}#1}}
  \def\wow{\rlap{\textbf{\smash{\,(!)}}}}
  \def\TOpt{\textsf{\footnotesize TOpt\,\;}}
  \def\PyZX{\textsf{\footnotesize PyZX\,\;}}
  \def\TPar{\textsf{\footnotesize TPar\,\;}}
  \def\RMr{\textsf{\footnotesize RM\textsubscript{r}\,\,\;}}
  \begin{center}
  \hspace*{-5em}
  \begin{tabular}[c]{||l|rr@{\;\;\;}|r@{\hspace{2ex}}l@{\quad}r@{\hspace{2ex}}l|rrr@{}r||}
      \hspace{4em} & \;\;\; & \;\;\; & \quad & \quad & \quad & \quad & \phantom{M} & \phantom{MMM}  & \phantom{MMM} & \phantom{M} \\[-3ex]
    \hline \hline
    \begin{minipage}[t]{5em}%
    \bfseries\centering
    \vspace*{3ex}
    Circuit
    \\[-2ex]
    \end{minipage}
  &
    \begin{minipage}[c]{4em}%
    \vspace*{.25ex}%
    \bfseries\centering
    \# extra~~ \\[-0.35ex] qubits
    \vspace*{-3ex}%
    \end{minipage}
    \hspace*{-3.125em}
  &&
    \begin{minipage}[t]{15em}%
    \vspace*{.25ex}%
    \bfseries\centering
    Best prior results
    \end{minipage}
    \hspace*{-15em}
  &&&&
    \begin{minipage}[t]{8em}%
    \vspace*{.25ex}%
    \bfseries\centering
    \hspace*{-5ex}
    Effect of our techniques
    \hspace*{-7ex}
    \end{minipage}
    \hspace*{-7.25em}
  &&&
  \\[-.375ex]
  & \,~Ref.~\,\hspace{-1.5ex} & \,~our~\,\hspace{-1ex} &
    \begin{minipage}[t]{6.5em}%
    \flushleft
    without TODD  
    \end{minipage}
    \hspace*{-5.5em}
  &&
    \begin{minipage}[t]{6.5em}%
    \flushleft
    with TODD  
    \end{minipage}
    \hspace*{-5.5em}
  &&
    \hspace*{-1ex}
    \begin{minipage}[t]{3em}%
    \centering
    Gadget \\[-0.35ex] Fusion  
    \end{minipage}
    \hspace*{-3ex}
  &
    \hspace*{-1ex}
    \begin{minipage}[t]{3em}%
    \centering
    STOMP \\[-0.35ex] 4 \& 5  
    \end{minipage}
    \hspace*{-3ex}
  &
    \hspace*{-1ex}
    \begin{minipage}[c]{3em}%
    \vspace*{1.25ex}%
    \centering
    TODD 
    \end{minipage}
    \hspace*{-3ex}
  &
  \\[-2ex]
  &
    \cite{HC-2018}\hspace{-0.5ex}
  &
    work\hspace{-1ex}
  &
      \#$T$ & algorithms 
  &
      \#$T$ & algorithm
  &
  &
  &
  &
  \\
  & && && && &&&
  \\[-2ex]
    \hline
    Barenco\;Toff_3 & 3  & 3  & 16  & \TPar \TOpt \PyZX & 14  & \TOpt & \tie  16. & \best  13\wow! & \best 13! &\\ 
    Barenco\;Toff_4 & 7  & 7  & 26  & \TOpt             & 24  & \TOpt &       28  & \best  24\wow! & \tie  24. &\\
    Barenco\;Toff_5 & 11 & 11 & 40  & \TPar \PyZX       & 34  & \TOpt & \tie  40. & \best  36!     & \tie  34. &\\
    \hline
    NC Toff_3       & 2  & 2  & 14  & \TOpt             & 13  & \TOpt &       15  & \best  13\wow! & \tie 13. &\\
    NC Toff_4       & 4  & 4  & 22  & \TOpt             & 19  & \TOpt &       23  & \best  19\wow! & \tie 19. &\\
    NC Toff_5       & 6  & 6  & 29  & \TOpt             & 25  & \TOpt &       31  & \best  26!     &      26 &\\
    NC Toff_{10}    & 16 & 16 & 65  & \TOpt             & 55  & \TOpt &       71  & \best  58!     &      56 &\\
    \hline
    GF(2^4)-mult    & 7  & 0  & 68  & \TPar \PyZX       & 52  & \PyZX & \tie  68. & \best  61!     & \best  47! &\\
    GF(2^5)-mult    & 9  & 0  & 101 & \RMr              & 86  & \PyZX &       115 & \best  97!     & \best  84! &\\
    GF(2^6)-mult    & 11 & 0  & 144 & \RMr              & 122 & \PyZX &       150 & \best 134!     & \best 118! &\\
    GF(2^7)-mult    & 13 & 0  & 208 & \RMr              & 173 & \PyZX &       217 & \best 192!     &       175 &\\
    GF(2^8)-mult    & 15 & 0  & 237 & \RMr              & 214 & \PyZX &       264 &       247      &       229 &\\
    \hline
    CSLA-Mux_3      & 17 & 6  & 58  & \RMr              & 45  & \PyZX &       62  & \best  48!     & \best 40! &\\
    HWB_6           & 24 & 20 & 71  & \TPar             & 51  & \TOpt &       75  & \best  62!     &       52  &\\
    Mod5_4          & 6  & 0  & 8   & \PyZX             & 7   & \PyZX & \tie   8. & \best   7\wow! & \tie   7. &\\
    Mod-Mult_{55}   & 10 & 3  & 19  & \TOpt             & 17  & \TOpt &       35  &        26      &       18  &\\
    Mod-Red_{21}    & 17 & 17 & 68  & \TOpt             & 55  & \TOpt &       73  & \best  63!     & \tie  55. &\\
    RC-Adder_6      & 21 & 10 & 44  & \TOpt             & 37  & \TOpt &       47  & \best  39!     & \tie  37. &\\
    VBE-Adder_3     & 4  & 4  & 24  & \TPar \TOpt \PyZX & 20  & \TOpt & \tie  24. & \best  20\wow! & \tie  20. &\\
    \hline \hline
  \end{tabular}
  \hspace*{-5em}
  \end{center}
\caption{%
  \label{table:results}
  Comparison of our techniques for $T$-count reduction against previous techniques, for a selection of benchmark circuits.
  For each circuit, we describe the number of qubits introduced by our algorithm, and the T-counts realised after each stage of our procedure (gadget fusion, then the STOMP PHAGE tactics, and finally the TODD subroutine of Ref.~\cite{HC-2018}).
  We compare the number of additional qubits required to the results of Ref.~\cite{HC-2018}, and we compare our results for $T$-count to the best known prior results.
  The prior results are classified into results which use the (computationally expensive) TODD subroutine of Ref.~\cite{HC-2018}, and those that don't.
  We indicate the algorithms which achieve these results by \TPar\!\cite{AMM-2014}, \TOpt\!\cite{HC-2018} (specifically either TOOL(F), TOOL(NF), or TODD), recursive Reed-Muller decoding \RMr\!\!\cite{AM-2019}, or \PyZX\!\cite{KvdW-2019}.
%
  In each case, we compare the counts achievable after Steps~1 and~3 of our algorithm to the prior results without TODD, and the count achievable after Step~4 to the prior results with TODD.
  ---
  In a number of instances, our results match or improve upon the best previously known results.
  Circuits for which our techniques are the same as or better than the best previous result are in bold-face; those where our results are strictly better are also marked with an asterisk.
  In some instances, we manage to obtain the best known result even without the use of the TODD subroutine, indicated by a (!) mark.
  Note that even when we do not achieve the best known result, we often exceed that result by a single $T$ gate.
}
\label{table:results}  
\end{table}
\vspace*{-1ex}
\section{Results}
\label{sec:results}

Table~\ref{table:results} presents a comparison of the results of our algorithm, with the previous best algorithms for reducing $T$-count.
In order to separately demonstrate the effectiveness of the fusion of phase gadgets, the PHAGE tactics, and TODD, we describe the $T$-count obtained by each of these stages of the algorithm.
Our results do not include an account of the cost of the Clifford group operations.
These are also of interest in principle, though these will likely be significantly less expensive than $T$ gates in the error-corrected setting in which the $T$-count becomes a meaningful quantity to reduce.

Almost all of our results were computed using a personal laptop (Dual-core 2.5\,GHz Intel i7-6500U with 8\,GiB of RAM), with either 3 independent runs of TODD, or 10 independent runs in the case of the circuits $\mathrm{GF}(2^4)\text{-mult}$ and $\mathrm{GF}(2^5)\text{-mult}$.
For the circuits $\mathrm{GF}(2^k)\text{-mult}$ for $6 \leqslant k \leqslant 8$, we instead performed $40$ runs of TODD on Dalhousie's Mathstat Cluster (each run being performed on a separate core), taking about 5 hours in total between these circuits.
The circuits in Table~\ref{table:results} on which we demonstrate our results are those which act on 35 qubits or fewer after the stage of replacing Hadamard gates with gadgets involving auxiliary qubits.

The circuits which were obtained using our techniques may be found on GitHub [\url{https://github.com/onestruggler/stomp}].
As our main aim was to consider reductions in $T$-count, our algorithm ignores the possibility that the measurement outcomes on the auxiliary qubits could be anything but $\ket{\texttt+}$: in the event of a $\ket{\texttt-}$ outcome, additional Clifford group operations would be required, which however would not affect the $T$-count.
We verified our circuits using \texttt{feynver}~\cite{Amy-2018}, which was extended to accomodate circuits involving post-selection of $\ket{\texttt{+}}$ states on qubits which are maximally entangled with a set of other qubits.

Our results show that our techniques, simple as they are, are competitive with the best known techniques for reducing $T$ count.
In some cases, the PHAGE tactics STOMP\;4 and STOMP\;5 match or even surpass the best known results which were known.
In other cases, it is apparent that the results achievable by supplementing our techniques with TODD are better than those which were previously known with TODD and also better than only using STOMP\;4 and STOMP\;5.
Note that even when our results do not match the best known prior results, they often differ from the best known $T$-count only by $1$.

The particular PHAGE tactics which we used to obtain these results, and the way in which we deploy them, are (apart from TODD) very simple.
We expect that better results should be achievable by a more refined approach to using these concepts, within the more general framework which we have described of deploying PHAGE tactics.

\vspace*{-1ex}
\section{Discussion}

\vspace*{-1ex}
\subsection{General observations}

It seems to us that the ZX calculus not only lends itself to analysis in terms of \pp\ operations, but also leads directly to the idea of analysing $T$-count in terms of the \pp\ operations and phase gadgets.
This is particularly the case when considering circuit transformations such as those of Ref.~\cite{HC-2018} which isolate a layer of diagonal operators by commuting CNOT gates past them.

Much of our analysis clearly generalises beyond the case of reduction of $T$-count (as a measure of the complexity of a $\mathcal D_3^n$ circuit), to simplifications of $\mathcal D_k^n$ circuits.
We expect that simple generalisations of Eqn.~\eqref{gadgetdec} would provide the opportunity to explore more general simplification of diagonal circuits.

\vspace*{-1ex}
\subsection{Towards better strategies for PHAGE tactics}

Our work motivates the concept of a PHAGE tactic (simplifying a part of a circuit by selecting the best identity to apply from a family of identities), and of the importance of strategically choosing identities to apply.
The latter concept is one which is absent from our actual results, but would clearly be important to develop more efficient techniques to make use of spider nest PHAGE tactics.
As the problem of reducing $T$-count is closely related to difficult decoding or tensor-decomposition problems, it is important to find ways to divide the problem into more approachable parts: the strategy/tactic distinction is one way in which this might be done, in which the development of effective ``tactics'' which are useful in some circumstances may be the easier part, and the development of effective ``strategies'' to deploy those tactics may be the more difficult part.

We now contemplate the form that a nuanced strategy to apply spider nest PHAGE tactics could take.
A possible approach would be to compute the smallest number of ``usable'' gadgets (phase gadgets with non-trivial contribution to the $T$-count) of different sizes, which are required for some PHAGE tactic to possibly be useful, and then identify subsystems which may have the appropriate number of usable gadgets.
This motivates the problem of finding ``dense'' collections of usable $T$-phase gadgets.
Any collection of phase $m$-gadgets which are not essentially independent of one another must have some significant overlap: this motivates measuring the \emph{density of $T$-phase gadgets} at each qubit $q$ --- which we define by
\begin{equation}
\label{eqn:Tdensity}
  d(q) := \sum_{k \ge 1} \frac{\#(\text{$T$-phase $k$-gadgets which act on $q$})}{k}  \;.
\end{equation}

\noindent
We also define $d_3(q)$, the \emph{3-max density (of $T$-phase gadgets)}, which is the same sum but for $1 \le k \le 3$.
It is easy to show that $d_3(q) \le \bigl(\tfrac{1}{18} + O(1/n)\bigr) \cdot n^3$; on any qubit or collection of qubits where $d(q)$ significantly exceeds this bound, there must be several $T$-phase $m$-gadgets for $m > 3$, and it may be helpful to apply Eqn.~\eqref{gadgetdec} to decompose these into gadgets on at most $3$ qubits.
Having ensured that the circuit does not have an obvious excess of large $T$-gadgets, we may then attempt to apply a PHAGE tactic any large collections of ``usable'' gadgets that we can find on subsystems of different sizes.
This suggests a strategy along the following lines (which may be repeated several times):
\begin{algenum}
  \item
  Compute density of $T$-phase gadgets acting on each qubit (\ie,~the $k \in \{1,2,3\}$ terms of Eqn.~\eqref{eqn:Tdensity}).
  Determine the largest integer $N \ge 5$, such that the sum of the $N$ largest $3$-densities is at least $\mathbf T'_N$.
  (If no $N \ge 5$ satisfies this, then let $N = 4$.)
\item
  For each $k \in \{4,5,\ldots,N\}$, compute the \emph{score} for each qubit as the sum of the densities of those $m$-gadgets (for $1 \le m \le 3$) which are useful.
\item
  Again for each $k$, rank each qubit in order of descending score, and compute $r(k)$ to be the ``lowest'' rank such that the sum of the scores of the qubits ranked $\{1,\,{r(k)\!-\!k\!+\!2},\,{r(k)\!-\!k\!+\!3},\,\ldots,\,r(k)\}$ is at least half of the smallest $T$-count of some spider-nest identity on $k$ qubits.
  Then, let $M(k)$ be the sum of the scores of the qubits ranked from $1$ to $r(k)$, so that $M(k)$ is proportional to the average total score of a uniformly random subset of these qubits.
\item
  Repeatedly sample (a polynomial number of times) from integers $k$, with probability proportional to $M(k)$; and for each sample attempt to find a subset of size $k$ from among the qubits with the highest scores $r(k)$, in which we may reduce the $T$-count by applying a spider nest identity.
  (We may attempt to find such a subset of size $k$ by breadth-first-search on the hypergraph of $T$-gadgets).
  Compute the \emph{value} of this set as the $T$-count reduction that can be realised on this subset.
\item
  If any set with positive value was found, realise a $T$-count reduction by applying an identity to the vertex-set with the largest value.
\end{algenum}

\vspace*{-1ex}
\subsection*{Acknowledgements.}

Our techniques were realised using some functionality from Quipper~\cite{Quipper}, and our results were verified with Feynman~\cite{Amy-2018}.
We wish to extend a very special thanks to Matthew Amy, who wrote a small extension of \texttt{feynver} to allow verification of procedures which post-select the $\ket{\texttt+}$ state, for the express purpose of helping us to independently verify the correctness of our reductions.

This work was performed in partnership with Cambridge Quantum Computing under the EPSRC Impact Acceleration Award ``Compilation and cost-reduction of quantum computations via ZX-calculus''.
N.\;de\;Beaudrap is further supported by the EPSRC National Hub in Networked Quantum Information Technologies (NQIT.org), and by a Fellowship funded by a gift from Tencent Holdings (tencent.com).
X.\;Bian is supported by NSERC and by AFOSR under Award No. FA9550-15-1-0331.
Q.\;Wang is supported by the AFOSR grant FA2386-18-1-4028.
Our results were made possible in part by the use of the Dalhousie University Mathstat Cluster [\url{https://www.mathstat.dal.ca/cluster/doku.php}].

We thank Earl Campbell, Luke Heyfron, and Alexander Cowtan for helpful discussions; and Aleks Kissinger and John van de Wetering for their interest and their feedback on earlier drafts of this work.
X.\;Bian would like to thank his Ph.D. supervisor Peter Selinger for his support.

\bibliographystyle{eptcs}
\bibliography{generic}

\appendix

\newpage
\section{\Pp\ operators as generators of $\mathcal D_k^n$}
\label{apx:generators-Dkn}

We show in this Section that, together with arbitrary global phases, the operators $D_{S,k} = \exp(-\frac{i\pi}{2^k}\,Z_S)$ for $S \subseteq [n]$ generate the group $\mathcal D_k^n$. 

Let $\mathcal M_1^n = \mathcal P_n$, and for $k \ge 2$, let $\mathcal M_k^n$ consist of all products of elements of $\mathcal D_k^n$ with products of CNOT and $X$ on various qubits.
As $\mathcal D_k^n$ is preserved under conjugation by CNOT and $X$ operations, it is easy to show that $\mathcal M_k^n$ forms a group for each $k$, and that in particular that operators in $\mathcal M_k^n$ decompose as a product $U_D U_X$ for  $U_D \in \mathcal D_k^n$ and $U_X$ a circuit of CNOT and $X$ gates.
\begin{lemma}
  For $k \ge 2$ an integer, $\mathcal D_k^n \subseteq \mathcal M_k^n \subseteq \mathcal C_k^n$.
\end{lemma}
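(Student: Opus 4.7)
The first inclusion $\mathcal D_k^n \subseteq \mathcal M_k^n$ is immediate from the definitions: every $U_D \in \mathcal D_k^n$ may be written as $U_D U_X$ with $U_X$ the empty CNOT+$X$ circuit (\ie, $U_X = \idop\sox{n}$), which lies in $\mathcal M_k^n$ by construction. The substantive content of the lemma is therefore the second inclusion, $\mathcal M_k^n \subseteq \mathcal C_k^n$. The plan is to exploit the normal-form decomposition $U = U_D U_X$ (guaranteed by the structural remark immediately preceding the lemma, which follows from Eqn.~\eqref{eqn:conjugate-Dk-NOTs} by pushing all CNOT and $X$ factors to the right of any $\mathcal D_k^n$ factor), together with the fact that CNOT and $X$ are Clifford operations and therefore normalise $\mathcal P^n$.

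Concretely, fix $U \in \mathcal M_k^n$ and write $U = U_D U_X$ with $U_D \in \mathcal D_k^n$ and $U_X$ a circuit of CNOT and $X$ gates. For any $P \in \mathcal P^n$, the operator $P' := U_X P U_X\herm$ is again an element of $\mathcal P^n$, because each CNOT and each $X$ gate normalises the Pauli group. Hence
\begin{equation*}
    U P U\herm
  \;=\;
    U_D\bigl(U_X P U_X\herm\bigr)U_D\herm
  \;=\;
    U_D\!\!\; P'\!\!\; U_D\herm.
\end{equation*}
Because $U_D \in \mathcal D_k^n \subseteq \mathcal C_k^n$ by definition of $\mathcal D_k^n$, the defining property~\eqref{eqn:CliffordHierarchy} of the Clifford hierarchy guarantees that $U_D\!\!\; P'\!\!\; U_D\herm \in \mathcal C_{k\text{--}1}^n$. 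As $P$ was arbitrary, this shows $UPU\herm \in \mathcal C_{k\text{--}1}^n$ for every $P \in \mathcal P^n$, which is precisely the condition defining $\mathcal C_k^n$. Hence $U \in \mathcal C_k^n$, and $\mathcal M_k^n \subseteq \mathcal C_k^n$ as required.

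The proof is essentially mechanical once the normal form $U = U_D U_X$ is in hand; there is no real obstacle, and no induction on $k$ is required because the Clifford-hierarchy property of $U_D$ is invoked directly rather than bootstrapped from lower levels. The one point that does merit care is the assertion that $U_X P U_X\herm$ is actually a Pauli: this relies on the fact that $\{\mathrm{CNOT}, X\} \subseteq \mathcal C_2^n$ and that $\mathcal C_2^n$ acts on $\mathcal P^n$ by conjugation, which is standard and can be verified one generator at a time.
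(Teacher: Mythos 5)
Your proof is correct and follows essentially the same route as the paper: decompose $U = U_D U_X$ via the normal form stated just before the lemma, push the Pauli through the CNOT+$X$ layer (which normalises $\mathcal P^n$), and invoke $U_D \in \mathcal D_k^n \subseteq \mathcal C_k^n$ to land in $\mathcal C_{k-1}^n$. The only cosmetic differences are that you make the trivial first inclusion explicit and treat $k=2$ uniformly rather than as a separate base case, both of which are fine.
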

\begin{proof}
  For $k = 2$, elements of $\mathcal M_k^n$ are Clifford circuits by construction.
  For $k > 2$, consider $U = U_X U_D \in \mathcal M_k^n$, where $U_X$ is a product of CNOT and $X$ gates, and $U_D \in \mathcal D_k^n$.
  Then for any $P \in \mathcal P_n$, we have $U P U\herm = U_D U_X P U_X\herm U_D\herm = U_D Q U_D\herm$, for $Q = U_X P U_X\herm \in \mathcal P_n$.
  As $U_D Q U_D\herm \in \mathcal C_{k-1}^n$\,, the Lemma follows.  
\end{proof}
\begin{lemma}
  For integers $n,k \ge 1$ and $S \subset [n]$, $D_{\!\!\:S,k} \in \mathcal D_k^n$.
\end{lemma}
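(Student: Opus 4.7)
My plan is to prove the Lemma by induction on $k$. The base case $k=1$ is immediate: using the Euler-style identity already displayed in the paper,
\begin{equation*}
   D_{S,1} \;=\; \cos(\tfrac{\pi}{2})\idop \,-\, i\sin(\tfrac{\pi}{2})\,Z_S \;=\; -i\,Z_S,
\end{equation*}
which is a Pauli operator up to a global phase, hence in $\mathcal P^n = \mathcal C_1^n$, and it is obviously diagonal. So $D_{S,1} \in \mathcal D_1^n$.

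For the inductive step, I would assume $D_{S',k-1} \in \mathcal D_{k-1}^n$ for every $S' \subseteq [n]$, and aim to show $D_{S,k} \in \mathcal D_k^n$. Diagonality is clear, so the task reduces to checking $D_{S,k}\,P\,D_{S,k}\herm \in \mathcal C_{k-1}^n$ for every $P \in \mathcal P^n$. Writing an arbitrary Pauli in the form $P = i^{c}\,X^{\vec a}\,Z^{\vec b}$ with $\vec a,\vec b \in \{0,1\}^n$, I observe that $D_{S,k}$ (being diagonal) commutes through $Z^{\vec b}$, so the whole computation comes down to conjugating $X^{\vec a}$.

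The key calculation, already isolated in Eqn.~\eqref{eqn:conjugate-Dk-X} of the paper, generalises coordinate-wise: $X^{\vec a}\, Z_S\, X^{\vec a} = (-1)^{\vec a \cdot \vec x\sur{S}} Z_S$, whence
\begin{equation*}
    X^{\vec a}\, D_{S,k}\, X^{\vec a} \;=\; D_{S,k}^{\,(-1)^{\vec a \cdot \vec x\sur{S}}}.
\end{equation*}
If $\vec a \cdot \vec x\sur{S} = 0$, this means $D_{S,k}\,P\,D_{S,k}\herm = P \in \mathcal P^n \subseteq \mathcal C_{k-1}^n$, and we are done. In the nontrivial case $\vec a \cdot \vec x\sur{S} = 1$, I get $D_{S,k}\, X^{\vec a}\, D_{S,k}\herm = D_{S,k}^{\,2}\, X^{\vec a} = D_{S,k{-}1}\,X^{\vec a}$, so that
\begin{equation*}
    D_{S,k}\,P\,D_{S,k}\herm \;=\; i^{c}\, D_{S,k{-}1}\, X^{\vec a}\, Z^{\vec b}.
\end{equation*}

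To conclude, I invoke the preceding lemma, which states $\mathcal M_{k-1}^n \subseteq \mathcal C_{k-1}^n$ and that $\mathcal M_{k-1}^n$ is a group. By the inductive hypothesis $D_{S,k{-}1} \in \mathcal D_{k-1}^n \subseteq \mathcal M_{k-1}^n$, and the Pauli factor $i^{c} X^{\vec a} Z^{\vec b}$ also lies in $\mathcal M_{k-1}^n$ (since Paulis decompose as a product of an $X$-circuit with an element of $\mathcal D_{k-1}^n$, using $Z_j = i\,D_{\{j\},1} \in \mathcal D_1^n \subseteq \mathcal D_{k-1}^n$). The product therefore lies in $\mathcal M_{k-1}^n \subseteq \mathcal C_{k-1}^n$, completing the induction. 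The only real subtlety is the bookkeeping for $k = 2$, where one must separately note that $\mathcal M_1^n = \mathcal P^n$ already contains all Paulis (and all Paulis times global phases) by definition, so the same conclusion applies without appeal to the inclusion of Paulis in $\mathcal D_{k-1}^n$.
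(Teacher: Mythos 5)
Your proof is correct and follows essentially the same route as the paper's: the identical conjugation computation $D_{S,k}\,X_A Z_B\,D_{S,k}\herm \in \{X_A Z_B,\ D_{S,k-1}X_A Z_B\}$ followed by an appeal to $\mathcal M_{k-1}^n \subseteq \mathcal C_{k-1}^n$. The only difference is presentational: you make explicit the induction on $k$ (and the $k=2$ edge case) that the paper's proof uses implicitly when it places $D_{S,k-1}X_AZ_B$ in $\mathcal M_{k-1}^n$, which is a mild improvement in rigour rather than a different argument.
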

\begin{proof}
Note that $D_{\!\!\:S,1} = -i Z_{s_1} \otimes \cdots \otimes Z_{s_m} = -i Z_S \in \mathcal P_n = \mathcal C_1^n$ for any $S \subseteq [n]$.
Also, by definition we have $D_{S,k{-}1} = D_{S,k{\!\!\;+\!}1}^{\;2}$ for any $k \geqslant 2$.
By decomposing any Pauli operator $P \in \mathcal P_n$ into a product $P \propto X_{A} Z_{B}$ for sets $A,B \subseteq [n]$, it is easy to see that $D_{\!\!\:S,\;\!k} \in \mathcal D_k^n$: we have
\begin{equation}
\begin{aligned}[b]
    D_{\!\!\:S,\!\;k}^{\phantom{\mathllap{-1}}} \, P \, D_{\!\!\:S,\!\;k}^{-1}  
  \;=\;
    \exp\bigl(-\tfrac{i\pi}{2^k} Z_S\bigr) \, X_A \,Z_B \, \exp\bigl(\tfrac{i\pi}{2^k} Z_S\bigr)
  \;&=\;
    \exp\bigl(-\tfrac{i\pi}{2^k} Z_S\bigr) \,
    \exp\Bigl(\tfrac{i\pi}{2^k} X_A \,Z_S\, X_A\herm\Bigr) \, X_A \,Z_B \,
  \\&=\;
    \exp\bigl(-\tfrac{i\pi}{2^k} Z_S\bigr) \,
    \exp\Bigl((-1)^{\vec x\sur{S} \cdot\!\; \vec x\sur{A}} \tfrac{i\pi}{2^k} \,Z_S\Bigr) \, X_A \,Z_B \,
  \\&=\;
    \begin{cases}
      X_A \,Z_B,  &\text{if $\vec x\sur{S} \cdot\:\! \vec x\sur{A} = 0$},
    \\
      \exp\bigl(-\tfrac{2\pi i}{2^k} Z_S\bigr) \, X_A \, Z_B,
        & \text{if $\vec x\sur{S} \cdot\:\! \vec x\sur{A} = 1$};
    \end{cases}
\end{aligned}  
\end{equation}
in either case, $D_{\!\!\:S,\!\;k}^{\phantom{\mathllap{-1}}} \, P \, D_{\!\!\:S,\!\;k}^{-1} \in \mathcal M_{k{-}1}^n \subseteq \mathcal C_{k{-}1}^n$\,.
Then $D_{\!\!\:S,\!\;k} \in \mathcal C_k^n$, and is therefore an element of $\mathcal D_k^n$.
\end{proof}

\begin{lemma}
  For any $n,k \ge 1$, any $V \in \mathcal D_k^n$ is proportional to a product of operators $D_{\!\!\:S,\!\;k}$\,  for $S \subseteq [n]$.
\end{lemma}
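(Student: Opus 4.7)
The plan is to induct on $k$. The base case $k = 1$ is immediate: any $V \in \mathcal D_1^n$ is a diagonal Pauli operator, hence proportional to some $Z_T = iD_{T,1}$.

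For the inductive step, I assume the claim at level $k-1$, take $V \in \mathcal D_k^n$ with phase function $\theta$ defined by $V\ket{z} = e^{i\theta(z)}\ket{z}$, and pass to its Walsh--Hadamard expansion $\theta(z) = \alpha + \sum_S c_S\, \chi_S(z)$ in the parity characters $\chi_S(z) := (-1)^{\vec x\sur{S}\cdot z}$. The strategy is to show that $c_S \in (\pi/2^k)\Z$ for every non-empty $S$, from which the result follows immediately, since each factor $\exp(ic_S Z_S)$ is then an integer power of $D_{S,k}$. A direct computation gives $V X_j V\herm = X_j\,D_j'$, where $D_j'$ is the diagonal operator acting as $e^{i(\theta(z\oplus e_j) - \theta(z))}$ on $\ket{z}$. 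Using that $V X_j V\herm \in \mathcal C_{k-1}^n$ by the definition of the Clifford hierarchy, and that conjugation by the Clifford $X_j$ preserves every level $\mathcal C_l^n$, one has for any Pauli $P$ that
\[ D_j' P D_j'\herm \;=\; X_j\,(V X_j V\herm)\,P\,(V X_j V\herm)\herm\, X_j \;\in\; \mathcal C_{k-2}^n\,, \]
so $D_j' \in \mathcal D_{k-1}^n$. By the inductive hypothesis, $D_j'$ is proportional to a product of $D_{T,k-1}$ operators, so the Fourier expansion of its phase function in the parity basis has all coefficients in $(\pi/2^{k-1})\Z$.

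On the other hand, using $\chi_S(z\oplus e_j) = \chi_S(z)$ when $j \notin S$ and $\chi_S(z\oplus e_j) = -\chi_S(z)$ when $j \in S$, differentiating the Walsh--Hadamard expansion of $\theta$ yields
\[ \theta(z \oplus e_j) - \theta(z) \;=\; \sum_{S \ni j}(-2\,c_S)\,\chi_S(z)\,. \]
Matching this expansion to that of the phase of $D_j'$ forces $-2\,c_S \in (\pi/2^{k-1})\Z$, and hence $c_S \in (\pi/2^k)\Z$, for every $S$ containing $j$; letting $j$ range over $[n]$ covers all non-empty $S \subseteq [n]$, completing the induction.

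The main obstacle I foresee is the step that extracts $D_j' \in \mathcal D_{k-1}^n$ from the weaker statement $V X_j V\herm \in \mathcal C_{k-1}^n$. Because the Clifford hierarchy is not closed under multiplication, one cannot simply cancel the factor of $X_j$; instead the argument must verify the Pauli conjugation condition on $D_j'$ directly, leaning on the (standard, inductively verified) fact that Clifford conjugation preserves each $\mathcal C_l^n$. Once this is settled, the remainder of the argument is bookkeeping with the Walsh--Hadamard expansion.
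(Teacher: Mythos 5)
Your overall route is essentially the paper's: expand the phase function of $V$ over the parity characters $\chi_S$, conjugate $X_j$ by $V$ to isolate the coefficients of the sets containing $j$, and descend one level in the hierarchy. Your derivation that $D_j' \in \mathcal D_{k-1}^n$ is correct and in fact cleaner than the paper's corresponding (unargued) assertion that $U_{[j]} \in \mathcal D_{k-1}^n$; the obstacle you flagged is not where the trouble lies.

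The genuine gap is in the ``matching'' step. A phase function of a diagonal unitary is only determined up to a $2\pi\Z$-valued function on $\{0,1\}^n$, and adding such a function shifts every non-trivial Walsh--Hadamard coefficient by an element of $2\pi\cdot 2^{-n}\Z = \pi 2^{1-n}\Z$. So from the inductive hypothesis you may only conclude that $-2c_S$ lies in $\pi 2^{1-k}\Z + \pi 2^{1-n}\Z$, which for $n > k$ is all of $\pi 2^{1-n}\Z$ and carries no useful information. Indeed the intermediate claim you aim for --- that $c_S \in (\pi/2^k)\Z$ for every non-empty $S$ --- is false for an arbitrary choice of $\theta$: take $V = \idop \in \mathcal D_k^n$ with $\theta(1\cdots 1) = 2\pi$ and $\theta(z) = 0$ otherwise; then $c_S = \pm \pi 2^{1-n} \notin (\pi/2^k)\Z$ once $n > k+1$, even though the conclusion of the lemma holds trivially for $V$. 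To repair the argument one must either work at the level of operators (two products of $D_{T,k-1}$'s can be proportional with genuinely different exponent vectors) or control the $2\pi$ ambiguity carefully when passing from the discrete derivatives $\theta(z \oplus e_j) - \theta(z)$ back to $\theta$. In fairness, the paper's own proof has the same soft spot at the corresponding step: the appeal to ``the linear independence of the operators $Z_S$'' does not justify that each factor of $\prod_{S \ni j}\exp\bigl(2^{k-1} i \hat\theta_S Z_S\bigr)$ is separately $\pm\idop$ or $\pm i Z_S$, because distinct products of the $Z_S$ coincide; the same example with $k=2$, $n=4$ gives factors $\exp(\pm i \pi Z_S/4)$ whose product over $S \ni j$ is nonetheless $\idop$.
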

\begin{proof}
Consider a decomposition of $V$ into a product of operators
$
  V \!= \prod_z V_z 
$ 
for $z$ ranging over $\{0,1\}^n$, where $\bra{z} V_z \ket{z} = \bra{z} V \ket{z} = \exp(i\theta_z)$ and where $\bra{y} V_z \ket{y} = 1$ for all $y \ne z$.
We may then express the operator $V_z$ as an exponential of a rank-1 projector on $n$ qubits:
\begin{equation}
\label{eqn:decomposeCtrlPhase}
\begin{aligned}[b]{}
\mspace{-12mu}
    V_z
  =\!\!\;
    \exp\Bigl(\! i \!\; \theta_z \bigl(\ket{z_1}\!\!\bra{z_1} \otimes \cdots \otimes \ket{z_n}\!\!\bra{z_n} \bigr) \Bigr)
  \!\!\;&=\!\;
    \exp\Bigl( \frac{i \theta_z }{2^n} \bigl[(\idop \!+\! (-1)^{z_1} Z) \!\otimes\!\!\: \cdots \!\!\:\otimes\! (\idop \!+\! (-1)^{z_n} Z) \bigr] \!\Bigr)
  \\&=\!\!\:
    \prod_{S \subset [n]} \!
      \exp\Bigl( \frac{i \theta_z }{2^n} \bigotimes_{j\in S} \,(-1)^{z_j} Z_j \!\!\:\Bigr)
  =\,
      \exp\Bigl(
				\sum_{S \subset [n]}\!\!
				\frac{i (-1)^{z \,\cdot\, \vec x\sur{S}} \!\!\:\theta_z }{2^n} \, Z_S
			\!\!\:\Bigr)
      .\!\!\!
\end{aligned}
\end{equation}%
Taking the product over $z \in \{0,1\}^n$, we then have
\begin{equation}
\begin{aligned}[b]{}
    V
  &=\!\!
    \prod_{z \in \{0,1\}^n} \!\!\!V_z
  \;=\;
     \exp\Bigl(
         \sum_{S \subset [n]}\!
					i \!\:\hat{\!\:\theta\!\!\:}_S\, Z_S
			\Bigr),
\end{aligned}
\mspace{-27mu}
\end{equation}%
where $\hat{\!\:\theta\!\!\:}_S = \sum_{\:z}\, (-1)^{z \,\cdot\, \vec x\sur{S}} \theta_z/2^n$ for the sake of brevity.
For $j \in [n]$, consider the effect of conjugation of $X_j$ by $V$: we have
\begin{small}%
\begin{equation}
  \begin{aligned}[b]{}
			V \!\!\: X_{\!\!\:j} V\herm\!
    \;&=\;
     \exp\Bigl(
         \sum_{S \subset [n]}\!\!
					i \hat{\!\:\theta\!\!\:}_S Z_S
			\Bigr)\!\!\;
     \exp\Bigl(
         \sum_{S' \subset [n]}\!\!
					i \hat{\!\:\theta\!\!\:}_{S'} X_{\!\!\:j} \!\!\; Z_{S'} \!\!\; X_{\!\!\:j}\herm
			\Bigr)
			X_j
    \\&=\;
     \exp\Bigl(
         \sum_{S \subset [n]}\!
					i \hat{\!\:\theta\!\!\:}_S \bigl[ Z_S + X_{\!\!\:j} Z_S X_{\!\!\:j}\herm \bigr]
			\Bigr)
			X_j
    \;=\;
     \exp\biggl(
         \sum_{\substack{S \subset [n] \\ j \in S}}\!
					2i \hat{\!\:\theta\!\!\:}_S Z_S 
			\biggr)
			X_{\!\!\:j} \,=:\, U_{[j]} X_{\!\!\:j}\,.
  \end{aligned}
  \mspace{-72mu}
\end{equation}%
\end{small}%
It follows that $U_{[j]} \in \mathcal D_{k{-}1}^n$, and that $U_{[j]}^{2^{k\text{--}2}}$ is a Pauli operator.
That is, the operator
\begin{equation}{}
\mspace{-36mu}
\begin{aligned}[b]
	U_{[j]}^{2^{k\text{--}2}}
  =\;
       \exp\biggl(
         \sum_{\substack{S \subset [n] \\ j \in S}}\!
					2^{k-1} i \hat{\!\:\theta\!\!\:}_S Z_S
			\biggr)
  &=\;
    \prod_{\substack{S \subseteq [n] \\ j \in S}}
      \!\Bigl(\!\!\:\cos\bigl(2^{k-1}\!\!\; \hat{\!\:\theta\!\!\:}_S\!\!\;\bigr) \!\!\; \idop \!+\! i \sin\bigl(2^{k-1}\!\!\;\hat{\!\:\theta\!\!\:}_S\!\!\;\bigr) \!\!\; Z_S \!\Bigr)
\end{aligned}
\mspace{-36mu}
\end{equation}
is a tensor product of $Z$ operations.
By the linear independence of the operators $Z_S$, it follows that every factor $\exp(2^{k-1} i \hat{\!\:\theta\!\!\:}_S Z_S)$ is either $\idop$ or $Z_S$, for $j \in S$.
As this result holds for all $j$, we obtain the same result for every non-empty set $S$.
This implies that $2^{k-1} \hat{\!\:\theta\!\!\:}_S \in \tfrac{\pi}{2}\Z$ for all $S \ne \varnothing$, or equivalently that $\hat{\!\:\theta\!\!\:}_S = m_s \pi/2^k$ for some $m \in \Z$.
It follows that $\exp(i \hat{\!\:\theta\!\!\:}_S Z_S) = D_{\!\!\:S,k}^{-m_S}$, so that $V \propto \smash{\prod_{\,S\big.} \, D_{\!\!\:S,k}^{\;-m_S}}$ for $S$ ranging over non-empty subsets of $[n]$.
\end{proof}

\newpage
\section{Proof of gadget decomposition}
\label{sec:n-gadgetDecomposition}

Here we provide a proof of Eqn.~\eqref{gadgetdec}.
We express this as a proof by induction or the proportionality (\ie,~the equality of the denotational semantics of ZX-diagrams) of a $T$-phase $n$-gadget for $n \ge 4$ on one side, and a collection of $3$-, $2$-, and $1$-gadgets as in Eqn.~\eqref{gadgetdec} on the other.

Below we use the notations $\tau:=\frac{\pi}{4}$ and $\iota:=-\frac{\pi}{4}$ for the angles of phase gadgets, written in this case inside (rather than outside) of the node to which this phase is associated.
We prove Eqn.~\eqref{gadgetdec} by induction on $n$.
The base case is the identity for $n=4$,
\begin{equation}\label{4linetgadget}  
%
          \begin{tikzpicture}
	\begin{pgfonlayer}{nodelayer}
		\node  (0) at (-2.875, 0) {$=$};
		\node  (1) at (4.5, -1.25) {};
		\node [X dot] (3) at (-0.75, 1.25) {};
		\node [Z dot] (4) at (1.75, -0.5) {\footnotesize$\!\tau\!$};
		\node [Z dot] (5) at (-4.75, -0.25) {};
		\node  (6) at (-5.25, -1.25) {};
		\node [Z dot] (7) at (0.75, -0.25) {};
		\node  (8) at (-2.25, 0.75) {};
		\node [Z dot] (9) at (-3.75, -0.5) {};
		\node at (9) [anchor=west] {\footnotesize\;$\pi/4$};
		\node [Z dot] (10) at (-0.75, 1.75) {};
		\node [X dot] (11) at (-1.25, -0.75) {};
		\node  (12) at (-3.5, 0.5) {};
		\node [Z dot] (13) at (-4.75, 0.5) {};
		\node  (14) at (-5.25, 0.5) {};
		\node [Z dot] (15) at (-1.25, -0.25) {};
		\node  (16) at (-3.5, -1.25) {};
		\node [Z dot] (17) at (-1.25, -1.25) {};
		\node [Z dot] (18) at (0.75, -1.25) {};
		\node  (19) at (4.5, -0.25) {};
		\node  (20) at (-2.25, -0.25) {};
		\node  (21) at (-2.25, -1.25) {};
		\node [X dot] (22) at (-4.25, -0.5) {};
		\node [Z dot] (23) at (-1.75, -0.75) {\footnotesize$\!\iota\!$};
		\node  (24) at (4.5, 0.75) {};
		\node [Z dot] (25) at (-0.25, 1.25) {\footnotesize$\!\iota\!$};
		\node [X dot] (26) at (1.25, -0.5) {};
		\node  (27) at (-3.5, -0.25) {};
		\node  (28) at (-5.25, -0.25) {};
		\node [Z dot] (29) at (0.75, 0.75) {};
		\node [Z dot] (30) at (-0.75, -0.25) {};
		\node [Z dot] (31) at (-4.75, -1.25) {};
		\node  (32) at (-3.5, 1.25) {};
		\node [Z dot] (33) at (-4.75, 1.25) {};
		\node  (34) at (-5.25, 1.25) {};
		\node [Z dot] (35) at (-1.75, 1.75) {\footnotesize$\!\tau\!$};
		\node  (36) at (4.5, 1.75) {};
		\node  (37) at (-2.25, 1.75) {};
		\node [X dot] (38) at (-1.25, 1.25) {};
		\node [Z dot] (39) at (-1.25, 1.75) {};
		\node [Z dot] (40) at (-1.25, 0.75) {};
		\node [Z dot] (41) at (-1.75, 1.25) {\footnotesize$\!\iota\!$};
		\node [X dot] (42) at (-1.25, 0.25) {};
		\node [Z dot] (43) at (-1.25, 0.75) {};
		\node [Z dot] (44) at (-1.25, -0.25) {};
		\node [Z dot] (45) at (-1.75, 0.25) {\footnotesize$\!\iota\!$};
		\node [Z dot] (46) at (-1.75, 0.75) {\footnotesize$\!\tau\!$};
		\node [Z dot] (47) at (-1.75, -0.25) {\footnotesize$\!\tau\!$};
		\node [Z dot] (48) at (-1.75, -1.25) {\footnotesize$\!\tau\!$};
		\node [Z dot] (49) at (-0.5, -1.25) {};
		\node [Z dot] (50) at (-0.5, 0.75) {};
		\node [Z dot] (51) at (0, 0.25) {\footnotesize$\!\iota\!$};
		\node [X dot] (52) at (-0.5, 0.25) {};
		\node [Z dot] (53) at (0.25, -1.25) {};
		\node [Z dot] (54) at (0.25, 1.75) {};
		\node [Z dot] (55) at (0.75, 1.25) {\footnotesize$\!\iota\!$};
		\node [X dot] (56) at (0.25, 1.25) {};
		\node [Z dot] (57) at (1.5, 1.75) {};
		\node [X dot] (58) at (2.25, -0.75) {};
		\node [Z dot] (59) at (1.5, 0.75) {};
		\node [Z dot] (60) at (2.75, -0.75) {\footnotesize$\!\tau\!$};
		\node [Z dot] (61) at (1.75, -1.25) {};
		\node [Z dot] (62) at (2.5, 1.75) {};
		\node [X dot] (63) at (3, 0.5) {};
		\node [Z dot] (64) at (2.5, 0.75) {};
		\node [Z dot] (65) at (3.5, 0.5) {\footnotesize$\!\tau\!$};
		\node [Z dot] (66) at (2.5, -0.25) {};
		\node [Z dot] (67) at (3.5, 1.75) {};
		\node [X dot] (68) at (4, -0.5) {};
		\node [Z dot] (69) at (3.5, -0.25) {};
		\node [Z dot] (70) at (4.5, -0.5) {\footnotesize$\!\tau\!$};
		\node [Z dot] (71) at (3.5, -1.25) {};
	\end{pgfonlayer}
	\begin{pgfonlayer}{edgelayer}
		\draw (26) to (4);
		\draw (3) to (25);
		\draw (21.center) to (1.center);
		\draw (20.center) to (19.center);
		\draw (8.center) to (24.center);
		\draw (15) to (11);
		\draw (17) to (11);
		\draw (11) to (23);
		\draw (10) to (3);
		\draw (30) to (3);
		\draw (29) to (26);
		\draw (7) to (26);
		\draw (18) to (26);
		\draw (22) to (9);
		\draw (6.center) to (16.center);
		\draw (28.center) to (27.center);
		\draw (14.center) to (12.center);
		\draw (13) to (22);
		\draw (5) to (22);
		\draw (31) to (22);
		\draw (34.center) to (32.center);
		\draw (33) to (22);
		\draw (37.center) to (36.center);
		\draw (39) to (38);
		\draw (40) to (38);
		\draw (38) to (41);
		\draw (43) to (42);
		\draw (44) to (42);
		\draw (42) to (45);
		\draw (52) to (51);
		\draw (50) to (52);
		\draw (49) to (52);
		\draw (56) to (55);
		\draw (54) to (56);
		\draw (53) to (56);
		\draw (58) to (60);
		\draw (57) to (58);
		\draw (59) to (58);
		\draw (61) to (58);
		\draw (63) to (65);
		\draw (62) to (63);
		\draw (64) to (63);
		\draw (66) to (63);
		\draw (68) to (70);
		\draw (67) to (68);
		\draw (69) to (68);
		\draw (71) to (68);
	\end{pgfonlayer}
\end{tikzpicture}
  \;:
\end{equation}
this is the $k = 4$ case of Eqn.~\eqref{eqn:phaseParityGateIdentity}, and was shown in Ref.~\cite{ACR-2018}.
Suppose that Eqn.~\eqref{gadgetdec} holds for $n = m \geqslant 4$. Let $\sigma_m= \tfrac{1}{8}(m\!-\!2)(m\!-\!3)\pi$ and $\theta_m=-\tfrac{1}{4}(m\!-\!3)\pi$. Then for $n=m+1$, we have 
\[
%
          \begin{tikzpicture}
	\begin{pgfonlayer}{nodelayer}
		\node  (0) at (-3, 1.25) {$=$};
		\node  (1) at (-5.5, 2.25) {};
		\node  (2) at (-5.5, 0.25) {};
		\node [Z dot] (3) at (-1.25, 0.25) {};
		\node  (4) at (-2.25, 0.25) {};
		\node  (5) at (-1.5, 1) {$\vdots$};
		\node  (7) at (-2.5, 1.75) {};
		\node [X dot] (8) at (-1, 1) {};
		\node  (9) at (-2.25, 1.75) {};
		\node  (10) at (-0.25, 0.25) {};
		\node [Z dot] (11) at (-0.5, 1) {};
		\node  (12) at (-0.25, 1.75) {};
		\node [anchor=south] (6) at (-0.25,0.375) {\footnotesize$\pi\!\!\;/\!\!\:4$};
		\node  (13) at (-2.5, 0.25) {};
		\node [Z dot] (14) at (-1.25, 1.75) {};
		\node  (15) at (-2.25, 2.5) {};
		\node  (16) at (-0.25, 2.5) {};
		\node [Z dot] (17) at (-0.75, 2.5) {};
		\node [X dot] (18) at (-0.75, 1.75) {};
		\node [Z dot] (19) at (-1.75, 2.5) {};
		\node [X dot] (20) at (-1.75, 1.75) {};
		\node  (21) at (0.25, 1.25) {$=$};
		\node [Z dot] (22) at (-3.75, 1) {};
		\node  (23) at (-3.5, 1.25) {};
		\node  (24) at (-5.25, 1.25) {};
		\node  (25) at (-3.5, 0.75) {\footnotesize$\pi\!\!\;/\!\!\:4$};
		\node  (26) at (-4.75, 0.75) {$\vdots$};
		\node [Z dot] (27) at (-4.75, 0.25) {};
		\node [Z dot] (28) at (-4.75, 1.25) {};
		\node  (29) at (-4.75, 1.75) {$\vdots$};
		\node  (30) at (-5.25, 0.25) {};
		\node  (31) at (-3.5, 2.25) {};
		\node [X dot] (32) at (-4.25, 1) {};
		\node [Z dot] (33) at (-4.75, 2.25) {};
		\node  (34) at (-5.25, 2.25) {};
		\node  (35) at (-3.5, 0.25) {};
		\node  (36) at (7, 0.25) {};
		\node [Z dot] (37) at (6.5, 1) {};
		\node [Z dot] (38) at (5.5, 1.25) {};
		\node  (39) at (1.25, 0.75) {$\vdots$};
		\node  (40) at (6.75, 0.75) {\footnotesize$\pi\!\!\;/\!\!\:4$};
		\node  (41) at (1.5, 2.25) {};
		\node [X dot] (42) at (3.5, 0.75) {};
		\node [Z dot] (43) at (3.5, 1.25) {};
		\node  (44) at (1.25, 1.75) {$\vdots$};
		\node [Z dot] (45) at (3.5, 0.25) {};
		\node [Z dot] (46) at (5.5, 0.25) {};
		\node  (47) at (7, 1.25) {};
		\node  (48) at (1.5, 1.25) {};
		\node [Z dot] (49) at (2.5, 2.25) {\footnotesize$\!\!\sigma_m\!\!$};
		\node  (50) at (1.5, 0.25) {};
		\node [Z dot] (51) at (4, 0.75) {\footnotesize$\!\!\theta_m\!\!$};
		\node  (52) at (7, 2.25) {};
		\node [X dot] (53) at (6, 1) {};
		\node [Z dot] (54) at (5.5, 2.25) {};
		\node  (55) at (7, 3) {};
		\node  (56) at (1.5, 3) {};
		\node [Z dot] (57) at (2, 3) {};
		\node [X dot] (58) at (2, 2.25) {};
		\node [Z dot] (59) at (6, 3) {};
		\node [X dot] (60) at (6, 2.25) {};
		\node [Z dot] (61) at (2.5, 0.25) {\footnotesize$\!\!\sigma_m\!\!$};
		\node [Z dot] (62) at (2.5, 1.25) {\footnotesize$\!\!\sigma_m\!\!$};
		\node [Z dot] (63) at (3, 1.25) {};
		\node [Z dot] (64) at (3.5, 1.75) {\footnotesize$\!\!\theta_m\!\!$};
		\node [Z dot] (65) at (3, 2.25) {};
		\node [X dot] (66) at (3, 1.75) {};
		\node [Z dot] (67) at (4.5, 0.25) {};
		\node [Z dot] (68) at (5, 1.75) {\footnotesize$\!\!\theta_m\!\!$};
		\node [Z dot] (69) at (4.5, 2.25) {};
		\node [X dot] (70) at (4.5, 1.75) {};
		\node  (71) at (1, 0.25) {$m+1$};
		\node  (72) at (1.25, 1.25) {$i$};
		\node  (73) at (1.25, 3) {$1$};
		\node  (74) at (1.25, 2.25) {$2$};
		\node  (75) at (1.75, -0.75) {};
		\node [X dot] (76) at (-2, -3.25) {};
		\node [X dot] (77) at (-2.25, -2.25) {};
		\node [Z dot] (78) at (-0.25, -1.75) {};
		\node [Z dot] (79) at (-1.5, -3.25) {\footnotesize$\!\!\theta_m\!\!$};
		\node [Z dot] (80) at (1, -3) {};
		\node [X dot] (81) at (-1, -2.25) {};
		\node  (82) at (-4.75, -1.75) {$2$};
		\node [Z dot] (83) at (-2.5, -2.75) {};
		\node  (84) at (-4.75, -3.25) {$\vdots$};
		\node  (85) at (-4.5, -0.75) {};
		\node [Z dot] (86) at (-1.75, -2.25) {\footnotesize$\!\!\theta_m\!\!$};
		\node [Z dot] (87) at (-2.5, -1.75) {};
		\node  (88) at (-4.5, -3.75) {};
		\node  (89) at (1.75, -2.75) {};
		\node [anchor=south] at (1.25,-3.5) {\footnotesize$\pi\!\!\;/\!\!\:4$};
		\node  (91) at (-4.75, -2.75) {$i$};
		\node  (92) at (-4.75, -0.75) {$1$};
		\node  (93) at (-4.5, -2.75) {};
		\node  (94) at (-5.5, -2.75) {$=$};
		\node [Z dot] (95) at (-2, -3.75) {};
		\node [Z dot] (96) at (-1.25, -3.75) {};
		\node  (97) at (-5, -3.75) {$m+1$};
		\node [Z dot] (98) at (-2, -2.75) {};
		\node [Z dot] (99) at (-1.5, -1.75) {};
		\node  (100) at (-4.5, -1.75) {};
		\node [Z dot] (101) at (-0.5, -2.25) {\footnotesize$\!\!\theta_m\!\!$};
		\node [Z dot] (102) at (-3.5, -3.75) {\footnotesize$\!\!\sigma_m\!\!$};
		\node [Z dot] (103) at (-0.25, -2.75) {};
		\node [Z dot] (104) at (-0.25, -3.75) {};
		\node  (105) at (1.75, -3.75) {};
		\node  (106) at (-4.75, -2.25) {$\vdots$};
		\node [X dot] (107) at (0.5, -3) {};
		\node [Z dot] (108) at (-3.5, -2.75) {\footnotesize$\!\!\sigma_m\!\!$};
		\node  (109) at (1.75, -1.75) {};
		\node [X dot] (110) at (-3.5, -1.25) {};
		\node [Z dot] (111) at (-3.5, -1.75) {};
		\node [Z dot] (112) at (-3, -1.25) {\footnotesize$\!\!\sigma_m\!\!$};
		\node [Z dot] (113) at (-3.5, -0.75) {};
		\node [Z dot] (114) at (-2.5, -0.75) {};
		\node [Z dot] (115) at (-1.5, -0.75) {};
		\node [Z dot] (116) at (-0.25, -0.75) {};
	\end{pgfonlayer}
	\begin{pgfonlayer}{edgelayer}
		\draw [style=braceedge] (2.center) to node[wire label, inner sep=5 pt]{$m+1$} (1.center);
		\draw (8) to (11);
		\draw (4.center) to (10.center);
		\draw (9.center) to (12.center);
		\draw (14) to (8);
		\draw (3) to (8);
		\draw [style=braceedge] (13.center) to node[wire label, inner sep=5 pt]{$m$} (7.center);
		\draw (15.center) to (16.center);
		\draw (17) to (18);
		\draw (19) to (20);
		\draw (32) to (22);
		\draw (30.center) to (35.center);
		\draw (24.center) to (23.center);
		\draw (34.center) to (31.center);
		\draw (33) to (32);
		\draw (28) to (32);
		\draw (27) to (32);
		\draw (53) to (37);
		\draw (50.center) to (36.center);
		\draw (48.center) to (47.center);
		\draw (41.center) to (52.center);
		\draw (43) to (42);
		\draw (45) to (42);
		\draw (42) to (51);
		\draw (54) to (53);
		\draw (38) to (53);
		\draw (46) to (53);
		\draw (56.center) to (55.center);
		\draw (57) to (58);
		\draw (59) to (60);
		\draw (65) to (66);
		\draw (63) to (66);
		\draw (66) to (64);
		\draw (69) to (70);
		\draw (67) to (70);
		\draw (70) to (68);
		\draw (107) to (80);
		\draw (88.center) to (105.center);
		\draw (93.center) to (89.center);
		\draw (100.center) to (109.center);
		\draw (98) to (76);
		\draw (95) to (76);
		\draw (76) to (79);
		\draw (78) to (107);
		\draw (103) to (107);
		\draw (104) to (107);
		\draw (85.center) to (75.center);
		\draw (87) to (77);
		\draw (83) to (77);
		\draw (77) to (86);
		\draw (99) to (81);
		\draw (96) to (81);
		\draw (81) to (101);
		\draw (113) to (110);
		\draw (111) to (110);
		\draw (110) to (112);
		\draw [bend left=15, looseness=1.00] (114) to (77);
		\draw (115) to (81);
		\draw (116) to (107);
	\end{pgfonlayer}
\end{tikzpicture}

\]
In the last diagram above, we substitute every 4-gadget with the RHS of  (\ref{4linetgadget}), and fuse together all the phase gadgets that dwell on the same lines. We assert that the resulted diagram after fusion is exactly the decomposition as presented on the RHS of (\ref{gadgetdec}) when $n=m+1$. This can be checked by calculating the phase angles of all gadget. For the 1-gadget on line 1, it comes from fusing all the 1-gadgets on line 1 which are obtained from the decomposition of all 4-gadgets connected with line. There are  ${m}\choose{2}$ such 4-gadgets, so the angle of the final 1-gadget on line 1 is
${{m}\choose{2}} \frac{\pi}{4}=\frac{(m-1)(m-2)\pi}{8}=\sigma_{m+1}$. For the final 2-gadget on line 1 and line 2, the phase angle is $\sigma_m+{{m-1}\choose{2}} \frac{-\pi}{4}=-\frac{(m-2)\pi}{4}= \theta_{m+1}$. Similarly, one can check that the 1-gadget on each line has phase angle $\sigma_{m+1}$, 2-gadget on every two lines has phase angle $\theta_{m+1}$, and 3-gadget on every three lines has phase angle $\frac{\pi}{4}$. 

Therefore, (\ref{gadgetdec}) holds for $n=m+1$. This completes the proof. 

\end{document}
